\newcommand{\gt}{>}
\newcommand{\C}{\mathbb{C}}
\newcommand{\N}{\mathbb{N}}
\newcommand{\Z}{\mathbb{Z}}
\newcommand{\we}{\tilde{\mathbf{w}}}
\newcommand{\wei}{\tilde{\mathbf{w}}_i}
\newcommand{\weii}{\tilde{\mathbf{w}}_{i+1}}
\newcommand{\weim}{\tilde{\mathbf{w}}_{i-1}}
\newcommand{\w}{\mathbf{w}}
\newcommand{\e}{\mathbf{e}}
\newcommand{\ve}{\mathbf{v}}
\newcommand{\X}{\mathbf{X}}
\newcommand{\x}{\mathbf{x}}
\newcommand{\Id}{\mathbf{I}}
\newcommand{\ese}{\mathbf{S}}
\newcommand{\Te}{\mathbf{T}}
\newcommand{\Ge}{\mathbf{G}}
\newcommand{\J}{\mathbf{J}}
\newcommand{\M}{\mathbf{M}}
\newcommand{\Imonio}{\tilde{\mathbf{I}}}
\newcommand{\Ree}{\mathbf{R}}
\newcommand{\te}{\mathbf{T}}
\newcommand{\efe}{\mathcal{F}}
\newcommand{\ce}{\mathcal{C}}
\newcommand{\tr}{\textnormal{tr}}
\newcommand{\diag}{\textnormal{diag}}
\newtheorem{theorem}{Theorem}
\newtheorem{corollary}{Corollary}
\theoremstyle{definition}
\newtheorem{definition}{Definition}[section]
\newtheorem{propiedad}{Property}[section]
\theoremstyle{plain}
\newtheorem{lemma}{Lemma}[section]
\newcommand{\Ex}{\mathbb{E}}
\newcommand{\tildI}{\tilde{\mathcal{I}}(p)}
\newcommand{\de}{\mathbf{d}}
\pgfplotsset{tick label style={font=\scriptsize},
	legend style = {font=\small},
	xlabel style ={font = \footnotesize},
	ylabel style ={font = \footnotesize},
	yticklabel style={
	},		scaled ticks = false,
	legend style = {cells={anchor=west}},
	grid = both,
	every axis plot/.append style={line width=1pt}
}
\pgfplotsset{compat=newest}
\begin{document}
%
\title{On the Correlation between the Noise and \emph{a Priori} Error Vectors in Affine Projection Algorithms}
%
%
%

\author{Andr\'{e}s Altieri
\thanks{A. Altieri is with Universidad de Buenos Aires and CSC-CONICET, Buenos Aires, Argentina (email:  aaltieri@fi.uba.ar, andres.altieri@conicet.gov.ar).}
\thanks{This work was partially supported by the Grant UBACyT 20020190100330BA of Universidad de Buenos Aires and the International Cooperation Project CNRS-CONICET MOSIME.}}

\maketitle

\begin{abstract}
This paper analyzes the correlation matrix between the \emph{a priori} error and measurement noise vectors for affine projection algorithms (APA). This correlation stems from the dependence between the filter tap estimates and the noise samples, and has a strong influence on the mean square  behavior of the algorithm. We show that the correlation matrix is  upper triangular, and compute the diagonal elements in closed form, showing that they are independent of the input process statistics. Also, for white inputs we show  that the matrix is fully diagonal. These results are valid in the transient and steady states of the algorithm considering a possibly variable step-size. Our only assumption is that the filter order is large compared to the projection order of APA and we make no assumptions on the input signal except for stationarity. Using these results, we perform a steady-state analysis of the algorithm for small step size and  provide a new simple closed-form expression for mean-square error, which has comparable or better accuracy to many preexisting expressions, and is much simpler to compute. Finally, we also obtain expressions for the steady-state energy of the other components of the error vector.

%
%
%
%
%

\end{abstract}

\begin{IEEEkeywords}
Affine projection algorithm (APA), adaptive filter, steady-state analysis, a priori error.
\end{IEEEkeywords}

%

\section{Introduction and Main Contributions}
%
%
%
%

\IEEEPARstart{A}{daptive} filters~\cite{LeoHernan} have played a major role in many signal processing
applications over the last few decades. The \emph{normalized least mean-squares} (NLMS) algorithm is a widely used alternative, mainly due to its good performance, ease of implementation and low
computational cost. As a downside, its rate of convergence is
sensibly reduced when colored inputs are used~\cite{Haykin}. In this context, the
\emph{affine projection algorithm} (APA)~\cite{UmedaOzeki_APA}, provides an increase in convergence speed with a modest increase in computational
complexity, while maintaining a robust behavior. 

The analysis of the convergence behavior of APA 
is very involved due to the nonlinear dynamics of the update equations, which introduces a strong correlation between the magnitudes involved. In particular, the filter tap estimates are correlated with previous inputs and noise samples via the filter update equations. This induces a statistical dependence between previous noise samples and the error vector obtained from the reference signal and the  estimates provided by APA. This correlation is key to analyzing the performance of APA but, for tractability reasons, some simplifying assumptions are usually invoked. In many cases, these assumptions include  simplifications on the input signal model~\cite{SB_2000,Ogunfunmi2011,Wan2011} or one or more independence assumptions~\cite{SB_2000,SS_2004,RRTB_2007,Song2011} between filter tap estimates, filter inputs, noise, or functions of these magnitudes.
For example, in \cite{SB_2000}, the authors perform a mean square (MS) analysis considering a simplified model for the input process. In this model~\cite{Slock}, the time-delayed input vector obtained from a stationary stochastic process is replaced by a sequence of independent random vectors which can take a finite number of orthogonal directions. In addition, they consider a strong hypothesis that past noise samples and filter coefficients are independent.  In \cite{SS_2004}, energy conservation arguments are used  to study the MS behavior of APA. Although there isn't a simplified model for the input process, the independence between past noise samples and filter coefficients is maintained and other independence hypotheses are added. Later works extended \cite{SB_2000,SS_2004} by  attempting to consider the correlation between the filter coefficients and past noises. For example, \cite{Ogunfunmi2011} extends the analysis in \cite{SB_2000} to consider the correlation between the noise and the error vector, but uses the same simplified model of the input signal. 
In \cite{Song2011} the authors extend \cite{SS_2004} considering the dependence on the noise and the filter tap estimates, but consider an independence assumptions between the tap estimates and a matrix obtained from input samples.
Finally in \cite{Wan2011}, they perform a MS analysis by developing recursive update equations for the correlation matrix of the filter tap error vector, considering its correlation with the noise. However, the analysis considers several simplifying assumptions, for example, simplifications for the input signal which are valid only asymptotically for white inputs. 

In this paper, we provide further insight on the MS behavior of APA by analyzing the correlation matrix between the additive noise vector $\ve_i = [v_i,...,v_{i-K+1}]^T$ and the \emph{a priori} error vector $\e_{a,i}$, which is the error between the signal estimate produced by APA and the reference signal without the noise~\cite{SS_2004} ($K$ is the APA projection order).
This correlation matrix has a strong impact on the transient and steady state MS behavior of the error vector. When the tap estimates are assumed to be independent of the noise, this matrix is assumed to be zero, which simplifies the analysis. In principle, this correlation matrix is very complex and depends on the input signal statistics. However, our analysis  shows that it is possible to compute its diagonal in closed form both in the transient and steady state of the algorithm. Furthermore, we show that this correlation depends only on the sequence of step-sizes and the noise variance, and is  independent of the input process. In addition, we show that for a white input process the matrix is diagonal, and hence, we fully characterize it. For our analysis we consider the standard APA with a variable step-size without regularization. In contrast to previous works, we do not take any assumptions on the input process and our only assumption is that the filter length $M$ is large compared to the projection order $K$ of APA.
Our simulations show that the obtained expressions are also valid in cases where this assumption does not hold.
The previous works mentioned before have analyzed many aspects of the MS behavior APA but have not analyzed the structure of this matrix. A related preceding work is \cite{RRTB_2007}, in which the authors find the trace of the correlation matrix for the case of a white input sequence for variably regularized and fixed step-size algorithm. Our work is an extension because we consider any input process, we compute the diagonal of the matrix instead of the trace and give information on off-diagonal elements. 
Using this analysis, we perform a new MS steady-state analysis of the behavior of the error vector for a small step-size. We provide a new simple closed-form expression for the steady-state mean-square error (MSE) of the algorithm which only depends on the noise variance, APA projection order and step-size. We show that, although very simple, this expression captures the behavior of the MSE for a small step-size. We compared this expression with other existing closed-form simple expressions from \cite{SB_2000,SS_2004,Ogunfunmi2011} obtained with and without the independence assumption which depend on the input signal statistics. We show that  our expression has a better or comparable accuracy and is very simple to compute. Finally, we provide a characterization of the steady-state energy of the other components of the error vector which, we believe, has not been done before.



The	 paper is organized as follows: in Section \ref{sec:defns} we introduce the model and notation used. In Section \ref{sec:Eeav} we present the analysis of the correlation between the noise and the \emph{a priori} error vector and in Section \ref{sec:error} we perform the steady state analysis. Finally, in Section \ref{sec:sims} we present simulation results, and in Section \ref{sec:closing} we present some closing remarks.

	\subsection{Notation}
We use boldface symbols for vectors (lower case) and matrices (upper case). $(\cdot)^T$ denotes transpose, $(\cdot)^H$ conjugate and transpose and $(\cdot)^*$ denotes conjugate. $\diag(\cdot)$ is the diagonal of a matrix, $\Ex$ is the expectation. For a matrix $\Ree$ we denote its $(i,j)$ element as $[\Ree]_{i,j}$. $\#$ denotes the cardinality of a set.


\section{System Model and APA recursion} \label{sec:defns}
We consider a reference signal $d_i$ which is generated using a linear  model: 
\begin{equation}
d_i = \x_i^H \w+v_i,
\end{equation}
where $\w \in \C^{M\times 1}$ is the finite impulse response of a causal system. The input signal vector $\x_i = [x_i,...,x_{i-M+1}]^T$ contains samples of a complex second-order stationary process with zero mean and $v_i$ is complex zero-mean white measurement noise of variance $\sigma_v^2$, independent of input.  The goal is to
obtain an estimate $\w_i$ of the unknown system vector $\w$ using the observations $\{d_n, \x_n\}_{n\leq i}$. This is achieved by feeding an adaptive filter the same input
signal as the unknown filter and adjusting its response using the error signal $e_i = d_i - \x_i^H \w_i$.
If we define the $M \times K$ data matrix $\X_i = \left[\x_i \ldots
\x_{i-K+1}\right]$, and the $K \times 1$ reference  data vector
$\de_i = [d_i  \ldots d_{i-K+1}]^T$, we can define an error vector 
	$\e_i = \de_i - \X_i^H \w_i$, 
which measures the error that  $\w_i$ will produce estimating the reference signal for the time range $i,...,i-K+1$. The standard APA recursion can be stated as~\cite{LeoHernan}: 
\begin{equation}
	\w_{i+1} = \w_i + \mu \X_i (\beta \Id_K + \X_i^H \X_i)^{-1} \e_i.  \label{APA}
\end{equation}
The parameter $\beta>0$ is a regularization which is included to improve the conditioning of the matrix  $\X_i^H \X_i$, $\mu$ is a step size parameter which controls the convergence of the algorithm, and $\Id_K$ is the $K\times K$
identity matrix.	
 Mean and mean square stability of (\ref{APA}) are guaranteed if $0 <
\mu \leq 2$; however, $0<\mu\leq1$ is preferred because it provides
a smaller steady-state error \cite{SB_2000}. In addition,
convergence speed and tracking are maximized when $\mu$ is close to
1, while the smallest steady-state error is achieved when it is
close to $0$. Using a large step-size when the estimation error is
large and decreasing it as the filter reaches steady-state, leads
to a fast initial convergence and small final error~\cite{PCB_2008, SS_2004_2}. For this reason, a time-varying step-size $\mu_i$ is generally used. For tractability, in this work we consider that $\beta=0$.

We also define a  misalignment vector $\wei = \w - \w_i$, which measures the difference between the estimate and the true system impulse response. With this we can formally write the \emph{a priori} error vector as $\e_{a,i} = \X_i^H \wei$. As mentioned before, this  indicates how $\w_i$ can estimate the vector $\de_i$ without
taking into account the additive noise component. Using the noise vector $\ve_i = [v_i \ldots v_{i-K+1}]^T$, we can
now  write the error vector as 
	$\e_i= \e_{a,i} + \ve_i.$

\section{Analysis of the correlation between the a priori error and noise} \label{sec:Eeav}
In this section, we analyze the  correlation matrix between the vectors  $\e_{a,i}$ and $\ve_i$. These vectors are correlated for $K>1$  because the noise samples $v_{i-1},...,v_{i-K+1}$ in $\ve_i$ appear in previous updates of  $\w_i$.  
We mention that $\Ex[\e_{a,i} \ve_i^H]_{1,1} = 0$ because the first component of the \emph{a priori} error vector is uncorrelated with the noise sample $v_i$. In particular for $K=1$, which corresponds to the NLMS algorithm, $\e_{a,i}$ and $\ve_i$ are uncorrelated scalar random variables.  
For $K>1$ the remaining elements of $\Ex[\e_{a,i} \ve_i^H]$ are unknown, except for the case of a white input signal for which the trace is known~\cite{RRTB_2007}. In what follows we analyze  this matrix and provide some insights on its structure. We start from the recursion of the misalignment vector  obtained  from (\ref{APA}):
\begin{equation}\label{eq:recwei}
	\wei = \weim  - \X_{i-1}  \ese_{i-1}  \e_{i-1}, 
\end{equation}
where $\ese_{i-1} = \mu_{i-1} (\X_{i-1} ^H\X_{i-1})^{-1}$.
By continuing the iteration into the past, the following recursion is obtained:
\begin{equation}  
	\we_i = \prod_{j=1}^K(\Id - \Ge_j) \we_{i-K} - \sum_{j=1}^{K} \left( \prod_{k=1}^{j} (\Id-\Ge_{k-1}) \right) \J_j \ve_{i-j}, \label{recursionerror1}
\end{equation}
where $\mathbf{J}_j = \X_{i-j} \ese_{i-j}$ and $\mathbf{G}_j = \J_j\X_{i-j}^H$, for $j>0$, and 
$\mathbf{G}_0 = 0$.  We can multiply  (\ref{recursionerror1})  by $\X_i^H$ and $\ve_i^H$ to obtain:
\begin{equation}
\Ex\left[\e_{a,i}\ve_i^H \right] = \Ex \left[\X_i^H\prod_{j=1}^K(\Id - \Ge_j) \we_{i-K}\ve_i^H \right] - \Ex\left[\X_i^H\sum_{j=1}^{K} \left( \prod_{k=1}^{j} (\Id-\Ge_{k-1}) \right) \J_j \ve_{i-j}\ve_i^H \right].
\end{equation}
The first term on the right side vanishes because the noise vector $\ve_{i}$ has zero mean and is independent of the rest. For $K=1$ the second expectation is also zero as mentioned before. For $K>1$ we then have:
\begin{equation}
	\Ex\left[\e_{a,i}\ve_i^H \right]  = - \sum_{j=1}^{K} \Ex\left[\X_i^H \left( \prod_{k=1}^{j} (\Id-\Ge_{k-1}) \right) \J_j \right] \Ex \left[\ve_{i-j}\ve_i^H \right]. \label{eq:Eav2}
\end{equation}
We then write
$\Ex\left[\ve_{i-j}\ve_i^H\right] = \sigma_v^2 \Imonio_{K,j}$
where $\Imonio_{K,m} \in \C^{K \times K}$, is a matrix such that ($m \in \Z$):
\begin{equation}
	\left[\Imonio_{K,m}\right]_{q,p} = \begin{cases} 1& \mbox{ if $p-q = m$}  \\ 0& \mbox{ otherwise.} \end{cases} \label{eq:Imonio}
\end{equation}
We notice that $\Imonio_{K,K} = \mathbf{0}$, which means that the $K$-th term in the second expectation in (\ref{eq:Eav2}) vanishes. Then we have:
\begin{equation}
\Ex\left[\e_{a,i}\ve_i^H \right] = - \sigma_v^2E\left[\X_i^H \J_1\right] \Imonio_{2,1}  \ \ \ (K = 2),
\end{equation}
and for $K>2$:
\begin{equation}
	\Ex\left[\e_{a,i}\ve_i^H \right] =- \sigma_v^2 \left(\Ex\left[\X_i^H \J_1\right]  \Imonio_{K,1} +\sum_{j=2}^{K-1} \Ex\left[\X_i^H\left( \prod_{k=1}^{j-1} (\Id-\Ge_{k}) \right) \J_j\right] \Imonio_{K,j}\right). \label{eavwhite2}
\end{equation}
The expression for $K>2$ is involved because of matrix product  $\prod_k (\Id-\Ge_{k})$ which is non-commutative. 
We now introduce an approximation for products of matrices of the form $\X_i^H \X_{i-j}$ which appear in (\ref{eavwhite2}). When the length $M$ of the filter is large compared to the projection order $K$, it is reasonable to take the approximation $\X_i^H \X_i \approx M \Ree_{x}$, where $\Ree_{x} \in \mathbb{C}^{K \times K}$ is the correlation matrix of the input process~\cite{RRTB_2007}. In the same way it is possible to approximate
\begin{equation}
\X_i^H \X_{i-j} \approx M \Ree_{x, -j} \label{eq:approxR}
\end{equation}
where $\Ree_{x,m} =  \Ex\left[ [x_i,...,x_{i-K+1}]^H [x_{i+m},...,x_{i+m-K+1}]\right]\in \mathbb{C}^{K \times K}$ is a time shifted correlation matrix.
Using (\ref{eavwhite2}) as starting point and considering approximation (\ref{eq:approxR}) we introduce our main result for this section:
\begin{theorem} \label{teo:main}
When $M \gg K$, assuming that (\ref{eq:approxR}) is a valid approximation,  for any input process $x$,  $\Ex\left[\e_{a,i} \ve_i^H\right]$ is an upper triangular matrix, and its diagonal elements are:
\begin{equation}
	\Ex\left[\e_{a,i} \ve_i^H\right]_{q,q} = -\sigma_v^2 \left(\mu_{i-1} + \sum_{j=2}^{q-1} \mu_{i-j}  \prod_{k=1}^{j-1} (1-\mu_{i-k}) \right) \ \ \ (2 \leq q \leq K). \label{aprox13}
\end{equation}
Furthermore, if the input process is white then $\Ex\left[\e_{a,i} \ve_i^H\right]$ is a diagonal matrix, that is, its only non-zero elements are given by (\ref{aprox13}).
\end{theorem}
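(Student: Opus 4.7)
The plan is to apply the approximation (\ref{eq:approxR}) systematically to (\ref{eavwhite2}) and reduce everything to deterministic matrix algebra. Abbreviating $\mathbf{A}_j := \Ex[\X_i^H\prod_{k=1}^{j-1}(\Id-\Ge_k)\J_j]$, I introduce the auxiliary family $\mathbf{Q}_j^{(\ell)} := \Ex[\X_i^H\prod_{k=1}^{j-1}(\Id-\Ge_k)\X_{i-\ell}]$, which captures all the products that appear. Replacing each $\X_a^H\X_b$ by its deterministic limit and each $\ese_{i-k}$ by $\frac{\mu_{i-k}}{M}\Ree_x^{-1}$ gives $\mathbf{Q}_1^{(\ell)} \approx M\Ree_{x,-\ell}$ together with the recursion
\begin{equation*}
	\mathbf{Q}_{j+1}^{(\ell)} \approx \mathbf{Q}_j^{(\ell)} - \mu_{i-j}\,\mathbf{Q}_j^{(j)}\Ree_x^{-1}\Ree_{x,j-\ell},
\end{equation*}
and $\mathbf{A}_j \approx \frac{\mu_{i-j}}{M}\mathbf{Q}_j^{(j)}\Ree_x^{-1}$. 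It is therefore enough to study $\mathbf{N}_j^{(\ell)} := \mathbf{Q}_j^{(\ell)}\Ree_x^{-1}/M$, which obeys $\mathbf{N}_{j+1}^{(\ell)} = \mathbf{N}_j^{(\ell)} - \mu_{i-j}\mathbf{N}_j^{(j)}\Ree_{x,j-\ell}\Ree_x^{-1}$ and $\mathbf{A}_j = \mu_{i-j}\mathbf{N}_j^{(j)}$.

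The linchpin is a one-line consequence of stationarity: since $[\Ree_{x,-s}]_{q,c} = r_x(q-c-s) = [\Ree_x]_{q-s,c}$, whenever $q \geq s+1$ the $q$-th row of $\Ree_{x,-s}$ coincides with the $(q-s)$-th row of $\Ree_x$, so that $[\Ree_{x,-s}\Ree_x^{-1}]_{q,:} = \mathbf{e}_{q-s}^T$ on those rows, where $\mathbf{e}_r$ denotes the $r$-th canonical basis vector.

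With this tool I would prove by induction on $j$ the following generalized statement: for every $\ell \geq j$ and every $q \geq \ell+1$, the $q$-th row of $\mathbf{N}_j^{(\ell)}$ equals $\big(\prod_{k=1}^{j-1}(1-\mu_{i-k})\big)\mathbf{e}_{q-\ell}^T$. The base case $j=1$ is the stationarity identity applied directly. For the inductive step, the recursion splits row $q$ of $\mathbf{N}_{j+1}^{(\ell)}$ into a contribution from $\mathbf{N}_j^{(\ell)}$ (already $\prod_{k=1}^{j-1}(1-\mu_{i-k})\mathbf{e}_{q-\ell}^T$ by hypothesis) and a correction $-\mu_{i-j}$ times row $q$ of $\mathbf{N}_j^{(j)}\Ree_{x,j-\ell}\Ree_x^{-1}$; by hypothesis this row reduces to $\prod_{k=1}^{j-1}(1-\mu_{i-k})\mathbf{e}_{q-j}^T\Ree_{x,-(\ell-j)}\Ree_x^{-1}$, and a second application of the stationarity identity, valid because $q-j \geq \ell-j+1$, collapses it to $\prod_{k=1}^{j-1}(1-\mu_{i-k})\mathbf{e}_{q-\ell}^T$, thereby producing the extra factor $(1-\mu_{i-j})$.

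Specializing $\ell = j$ shows that row $q \geq j+1$ of $\mathbf{A}_j$ has exactly one nonzero entry, $\mu_{i-j}\prod_{k=1}^{j-1}(1-\mu_{i-k})$, located in column $q-j$. Because right-multiplication by $\Imonio_{K,j}$ shifts that entry into column $q$ and annihilates the first $j$ columns, $\mathbf{A}_j\Imonio_{K,j}$ is upper triangular with the claimed constant diagonal on rows $q \geq j+1$; summing over $j$ and multiplying by $-\sigma_v^2$ yields (\ref{aprox13}) together with the upper-triangularity claim. For the white-input case, $\Ree_x = \Id$ and the first $s$ rows of $\Ree_{x,-s}$ vanish, so the same induction is easily strengthened to give $[\mathbf{N}_j^{(\ell)}]_{q,:} = \mathbf{0}$ also for $q \leq \ell$; this kills the remaining possible above-diagonal entries of $\mathbf{A}_j\Imonio_{K,j}$ and leaves a diagonal matrix. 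The main subtlety is the choice of inductive hypothesis with the extra parameter $\ell$: without it the recursion cannot close, since updating $\mathbf{N}_j^{(j)}$ to $\mathbf{N}_{j+1}^{(j+1)}$ mixes in $\mathbf{N}_j^{(j+1)}$.
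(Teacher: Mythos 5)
Your proposal is correct, and it reaches (\ref{aprox13}) by a genuinely different route from the paper's Appendix A. The paper expands $\prod_{k=1}^{j-1}(\Id-\Ge_k)$ into $2^{j-1}-1$ terms indexed by subsets of $\{1,\dots,j-1\}$, proves a backward induction along each index list (Lemma \ref{lema:exp2}) showing every term equals $\Id_{K,j}$ plus a correction in $\efe_j\cap\ce_j$, and then must re-sum the scalar coefficients via the identity $1+\sum_p(-1)^{\#\mathcal{I}_p}\prod\mu_{\sigma_n}=\prod_{k=1}^{j-1}(1-\mu_{i-k})$ to recover the product form. You avoid the combinatorial expansion entirely: keeping the product intact, your forward recursion $\mathbf{N}_{j+1}^{(\ell)}=\mathbf{N}_j^{(\ell)}-\mu_{i-j}\mathbf{N}_j^{(j)}\Ree_{x,j-\ell}\Ree_x^{-1}$, with the auxiliary parameter $\ell$ needed to close the induction, produces the factor $\prod_{k=1}^{j-1}(1-\mu_{i-k})$ directly as a telescoping of $(1-\mu_{i-j})$ at each step. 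The structural input is the same in both arguments --- your row-wise stationarity identity $[\Ree_{x,-s}\Ree_x^{-1}]_{q,:}=\mathbf{e}_{q-s}^T$ for $q\geq s+1$ is exactly the paper's Lemma \ref{lema:decomp} ($\Ree_{x,-s}=\Imonio_{K,-s}\Ree_x+\M_s$ with $\M_s\in\efe_s$) read one row at a time, and both treat (\ref{eq:approxR}) as licensing the replacement of every Gram-type product $\X_a^H\X_b$ by its deterministic value inside the expectation. What each buys: your version is shorter, dispenses with the index-list bookkeeping and with identity (\ref{aprox12}), and makes the white-input strengthening a one-line addendum to the same induction; the paper's version isolates explicitly the matrices $\te_{0,p,j}\in\efe_j\cap\ce_j$ and $\M_j\Ree_x^{-1}\Imonio_{K,j}$ that generate the above-diagonal entries for colored inputs, which is the natural starting point if one later wants to characterize those entries. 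One presentational caution: your upper-triangularity claim rests on noting that rows $q\leq j$ of $\mathbf{A}_j\Imonio_{K,j}$, which the induction does not control, can only populate columns $p\geq j+1>q$; you should state that observation explicitly rather than leave it implicit in ``annihilates the first $j$ columns.''
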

\begin{proof}
	It is presented in Appendix A.
\end{proof}
From Theorem \ref{teo:main} we outline the following conclusions:
\begin{itemize}
	\item The elements below the main diagonal of $\Ex\left[\e_{a,i} \ve_i^H\right]$ are zero.
	\item The diagonal of $\Ex\left[\e_{a,i} \ve_i^H\right]$ can be computed in closed form it depends only on the noise variance $\sigma_v^2$ and the sequence of step sizes $\{\mu_i\}$, that is, it does not depend on the statistics of the input process $x$. This is true in the transient and steady-state of the algorithm.
	\item In general the elements above the main diagonal will depend on the statistics of the input process $x$ except for the white input case, when they are zero.
\end{itemize} 
Considering a fixed step-size we have the following corollary which will be used in the following section:
\begin{corollary} Considering a constant step-size $\mu_{i} \equiv \mu$ in (\ref{aprox13}) we find that:
\begin{equation}
\Ex\left[\e_{a,i} \ve_i^H\right]_{q,q} = -\sigma_v^2 \left[1-(1-\mu)^{q-1}\right] \ \ (1 \leq q \leq K). \label{eq:diagEav}
\end{equation}
Notice that, in contrast to (\ref{aprox13}), this is also valid for $q=1$.
Summing the elements of the diagonal we find that:
\begin{equation}
	\Ex\left[\e_{a,i}^H \ve_i\right] = -\sigma_v^2 \left[K - \frac{1- \left(1-\mu\right)^K}{\mu}\right]. \label{eq:EeaHv}
\end{equation}
\end{corollary}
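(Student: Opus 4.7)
The corollary is a direct specialization of Theorem~\ref{teo:main}, so the plan is essentially to substitute $\mu_i \equiv \mu$ into (\ref{aprox13}) and recognize two geometric sums.

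First, I would rewrite (\ref{aprox13}) for constant step-size. With $\mu_{i-k}=\mu$ for all $k$, the product $\prod_{k=1}^{j-1}(1-\mu_{i-k})$ collapses to $(1-\mu)^{j-1}$ and the summand becomes $\mu(1-\mu)^{j-1}$. Factoring $\mu$ outside and re-indexing, the bracket in (\ref{aprox13}) turns into $\mu\sum_{j=1}^{q-1}(1-\mu)^{j-1}$, i.e., a finite geometric series with ratio $1-\mu$. Summing it gives $\mu\cdot\frac{1-(1-\mu)^{q-1}}{1-(1-\mu)} = 1-(1-\mu)^{q-1}$, which yields (\ref{eq:diagEav}) for $2\le q\le K$. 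To justify that (\ref{eq:diagEav}) also holds at $q=1$, I would simply note that the right-hand side evaluates to $-\sigma_v^2[1-(1-\mu)^0]=0$, which agrees with the observation already made in Section~\ref{sec:Eeav} that $\Ex[\e_{a,i}\ve_i^H]_{1,1}=0$ (the first component of $\e_{a,i}$ is uncorrelated with $v_i$).

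For the second identity, I would express $\Ex[\e_{a,i}^H\ve_i]$ as the trace of $\Ex[\ve_i\e_{a,i}^H]=(\Ex[\e_{a,i}\ve_i^H])^H$; since (\ref{eq:diagEav}) shows that the diagonal of $\Ex[\e_{a,i}\ve_i^H]$ is real, the trace and its conjugate coincide, so $\Ex[\e_{a,i}^H\ve_i]=\sum_{q=1}^{K}\Ex[\e_{a,i}\ve_i^H]_{q,q}$. Plugging in (\ref{eq:diagEav}) and reindexing gives
\[
\Ex[\e_{a,i}^H\ve_i]=-\sigma_v^2\left[K-\sum_{k=0}^{K-1}(1-\mu)^{k}\right],
\]
and a second application of the geometric-series formula replaces the inner sum by $\frac{1-(1-\mu)^K}{\mu}$, producing (\ref{eq:EeaHv}).

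There is no real obstacle here: the computation is routine once Theorem~\ref{teo:main} is in hand. The only point that requires a moment of care is checking that the $q=1$ entry, which is not covered by the statement of (\ref{aprox13}), is still consistent with the closed form (\ref{eq:diagEav}); this is handled by the uncorrelatedness remark mentioned above. The rest is bookkeeping with two geometric sums.
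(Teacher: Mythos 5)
Your proposal is correct and matches the derivation the paper intends (the corollary is stated without an explicit proof, being a routine specialization of Theorem~\ref{teo:main}): substitute $\mu_i\equiv\mu$ into (\ref{aprox13}), sum the resulting geometric series to get (\ref{eq:diagEav}), check consistency at $q=1$ via the uncorrelatedness of $v_i$ with the first component of $\e_{a,i}$, and sum the diagonal with a second geometric series to get (\ref{eq:EeaHv}). No gaps.
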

In the following section we use these results to derive a new steady-state analysis of the algorithm.

\section{A steady-state error analysis without the independence assumption} \label{sec:error}

In this section  we perform a small step-size analysis of the steady state behavior of the algorithm. We derive a new formula for the steady state value of the MSE and we also find the steady-state energy of the other components of the error vector $\e_i$. This is done under the conditions of Theorem \ref{teo:main} ($M\gg K$) and under a weak approximation on the energy of the error vector $\e_{a,i}$ in the steady state.

%
We start from the recursion of the error vector (\ref{eq:recwei}) and multiply by $\X_{i}^H$ to obtain:
\begin{equation}
	\X_i^H \weii = \X_i^H \wei  -  \mu_i \e_i.
\end{equation}
By taking the squared norm and the expectation we find that:
\begin{equation}
	\Ex\left[||\X_i^H \weii||^2\right] = \Ex\left[||\X_i^H \wei||^2\right] + \mu_i^2 \Ex\left[ ||\e_i||^2\right] -  \mu_i \Ex\left[\e_i^H \e_{a,i} + \e_{a,i}^H \e_i\right]. \label{eq:SSnormE}
\end{equation}
As $i\to \infty$, in the steady state condition, the step-size is assumed to converge to a small steady-state value $\mu \ll 1$. Alternatively, we can assume that the step-size is fixed and small for all time-instants ($\mu_i \equiv \mu \ll 1$). When the filter and the step-size have converged, the filter updates will be very small such that the following approximation is reasonable:
\begin{equation}
	\Ex\left[||\X_i^H \weii||^2\right] \approx \Ex\left[||\X_i^H \wei||^2\right] \ \ \ \text{(as $i \to \infty$).} \label{eq:ap}
\end{equation}
This means that, on average, there will be almost no reduction in the error after an update in the steady-state small-step size condition. Using this approximation in (\ref{eq:SSnormE}) we may obtain the following steady-state equation:
\begin{equation}
	\mu\Ex\left[||\e||^2\right]_\infty = \Ex \left[ \e^H  \e_{a}\right]_\infty+ \Ex \left[\e_{a}^H \e\right]_\infty, \label{eq:newap}
\end{equation}
where the subscript $(\cdot)_\infty$ denotes the steady-state values of the expectations as $i\to \infty$. This equation is valid in principle for any input process, provided that $\mu \ll 1$. 
We would like to compare (\ref{eq:newap}) to the general steady-state equations that were presented in \cite{SS_2004}:
\begin{equation}
	\mu \Ex \left[\e_i^H (\X_i^H \X_i)^{-1}\e_i \right] = \Ex \left[\e_{a,i}^H (\X_i^H \X_i)^{-1}\e_i+ \e_{i}^H (\X_i^H \X_i)^{-1}\e_{a,i}\right] \ \ (i\to \infty). \label{eq:Shin2}
\end{equation}
This equation is valid without any approximation whatsoever. It is interesting to notice that if in (\ref{eq:Shin2}) we consider a white input and for $M\gg K$ we approximate $\X_i^H \X_i \approx M \sigma_x^2 \Id$ we obtain (\ref{eq:newap}). This means that (\ref{eq:newap}) will be exact for a white input process for any value of $\mu$, but our analysis shows   it will  be a good approximation for any other input process when $\mu \ll 1$, that is, all input processes will behave as white processes in the steady-state, provided that $\mu$ is small enough.

We can now use the results of Theorem \ref{teo:main} to characterize the steady-state value of the error vector without using an independence assumption. Using that $\e_i = \e_{a,i}+\ve_i$ in (\ref{eq:newap}) we can derive an expression for the energy of the error vector:
\begin{equation}
	\Ex\left[||\e||^2\right]_\infty = \frac{2}{2-\mu}\left(\Ex\left[||\ve||^2\right]_\infty+ \Ex\left[\e_{a}^H\ve\right]_\infty + \Ex\left[\ve^H \e_{a}\right]_\infty\right). \label{eq:normaee}
\end{equation}
Using (\ref{eq:EeaHv}) in the (\ref{eq:normaee}), the steady-state value of the energy of the error vector  evaluates to:
\begin{equation}
	\Ex\left[||\e||^2\right]_{\infty} = \frac{2\sigma_v^2}{\mu(2-\mu)} \left(1- (1-\mu)^K \right). \label{eq:normaee2}
\end{equation}
Also, for the energy of the a priori error vector we  can find a closed form equation:
\begin{equation}
	\Ex\left[||\e_{a}||^2\right]_\infty = \frac{\mu}{2-\mu} \Ex\left[||\ve||^2\right]_\infty +  \frac{\mu-1}{2-\mu} \left( \Ex\left[\e_{a}^H\ve\right]_\infty + \Ex\left[\ve^H \e_{a}\right]_\infty\right), \label{eq:normaea}
\end{equation}
which can be evaluated using (\ref{eq:EeaHv}). In order to find the steady-state energy of the components of the error vector we introduce a simple assumption: 
\begin{itemize}
	\item A1) In the steady-state, with a small step-size, the components of the a priori error vector have the same energy:
	\begin{equation}
		\Ex\left[\e_{a}\e_{a}^H\right]_{q,q,\infty} \approx \frac{\Ex\left[||\e_{a}||^2\right]_\infty}{K}, \ \ (1 \leq q \leq K). \label{eq:appea2}
	\end{equation} 
\end{itemize}
Under this approximation we can write the excess mean square error (EMSE) as:
\begin{equation}
	\Ex\left[||\e_{a}||^2\right]_\infty \approx K \Ex\left[|e_{a}|^2\right]_\infty \label{eq:appea}
\end{equation}
where $e_{a,i} =d_i - \x_i^H \wei$ is the first component of $\e_{a,i}$.
Now replacing (\ref{eq:appea}) in (\ref{eq:normaea}) and noting that $\Ex\left[\ve_i^H \e_{a,i}\right]$ is a real magnitude we can write the excess mean square error (EMSE) in the steady-state as:
\begin{equation}
	\Ex\left[|e_{a}|^2\right]_\infty = \frac{\mu\sigma_v^2}{2-\mu}  +  \frac{2 (\mu-1)}{K(2-\mu)}   \Ex\left[\e_{a}^H\ve\right]_\infty. \label{eq:normaea2}
\end{equation}
The first term in (\ref{eq:normaea2}) is the formula for the EMSE for small $\mu$ presented in \cite{SS_2004} using the independence assumption and others. This expression does not consider the important dependence of the EMSE with $K$. 
The second term is a new additional term which depends on $K$ and under (\ref{eq:appea}) allows us to correct the EMSE using the correlation between the noise and the \emph{a priori} error vector. Finally, using (\ref{eq:EeaHv}) we can obtain a new and simple expression for the steady state EMSE as:
\begin{equation}
	\Ex\left[|e_{a}|^2\right]_\infty = \sigma_v^2 \left[1 - \frac{2 (1-\mu) \left(1 - (1-\mu)^{K}\right)}{K \mu(2-\mu)} \right].  \label{eq:Eea2}
\end{equation}
Finally, the energy of the $1 \leq q \leq K$ components of the error vector is given by:
\begin{equation}
	\Ex\left[\e_i\e_i^H\right]_{q,q} = \Ex\left[\e_{a,i}\e_{a,i}^H\right]_{q,q} +  \Ex\left[\e_{a,i} \ve_i^H\right]_{q,q}  
	+ \Ex\left[ \ve_i\e_{a,i}	^H\right]_{q,q} + \sigma_v^2.
\end{equation}
Under approximation A1 all the components of the \emph{a priori} error vector have energy given by (\ref{eq:Eea2}). Then, using (\ref{eq:Eea2}) and (\ref{eq:diagEav})  we can obtain the energy of each component of the error vector as ($1 \leq q \leq K$):
\begin{equation}
	\Ex\left[\e\e^H\right]_{q,q,\infty} =  \max\left\{0, 2 \sigma_v^2 \left[ (1-\mu)^{q-1} - \frac{(1-\mu)\left(1-(1-\mu)^K\right)}{K(2-\mu)\mu}\right]\right\}. \label{eq:diagEeeH}
\end{equation}
The $\max$ is included because it can be seen that the expression will become negative as $\mu \to  1$ for $q>1$. By setting $q=1$ in this expression we obtain the mean square error, which may be also obtained by adding $\sigma_v^2$ to (\ref{eq:Eea2}). 

\section{Numerical Results} \label{sec:sims}
In this section we perform simulations to explore the validity of the proposed expressions.  We consider that the impulse response of the true system $\mathbf{w}$ corresponds to a real measured acoustic impulse response which can be seen in Fig. \ref{fig:impulse_resp}.
We consider responses of length $M=512$, where the hypothesis $M\gg K$ is valid, and a shorter filter $M=32$ to show that the derived expressions are also valid in this case. For $M=512$ the first 512 samples are used, while for $M=32$ we use samples 106 to 137 of the impulse response.
We consider that the input signal $x_i$ is either a white Gaussian process, a first order Gaussian autoregressive process (AR1) with its pole close to 1, or a Gaussian autoregressive moving average process ARMA(2,2), obtained by filtering white Gaussian noise through a causal system with transfer function:
\begin{equation}
	H(z) = \frac{1-z^{-2}}{1 - 1.70223z^{-1} +0.71902z^{-2}}. \label{eq:arma22}
\end{equation}
This process is highly correlated, much more than the AR1 process, its
correlation matrix can have a conditioning number
larger than $10^5$ \cite{RRTB_2007}.
The additive noise $v_i$ is white Gaussian noise, with variance $\sigma_v^2$  adjusted to achieve a prescribed signal-to-noise ratio:
\begin{equation}
	\text{SNR} = 10 \log_{10}\frac{\Ex[|\x_i^H \w|^2]}{\sigma_v^2}.
\end{equation}
For all our simulations we consider an $\text{SNR}$ of 30dB. We denote by $J$ the number of averaged runs to obtain each plot. Finally, the APA algorithm is initialized with a zero initial condition.
\begin{figure}[t!]
	\centering
	\includegraphics[width=0.5\linewidth, trim=.5cm .6cm  0.4cm .5	cm,clip]{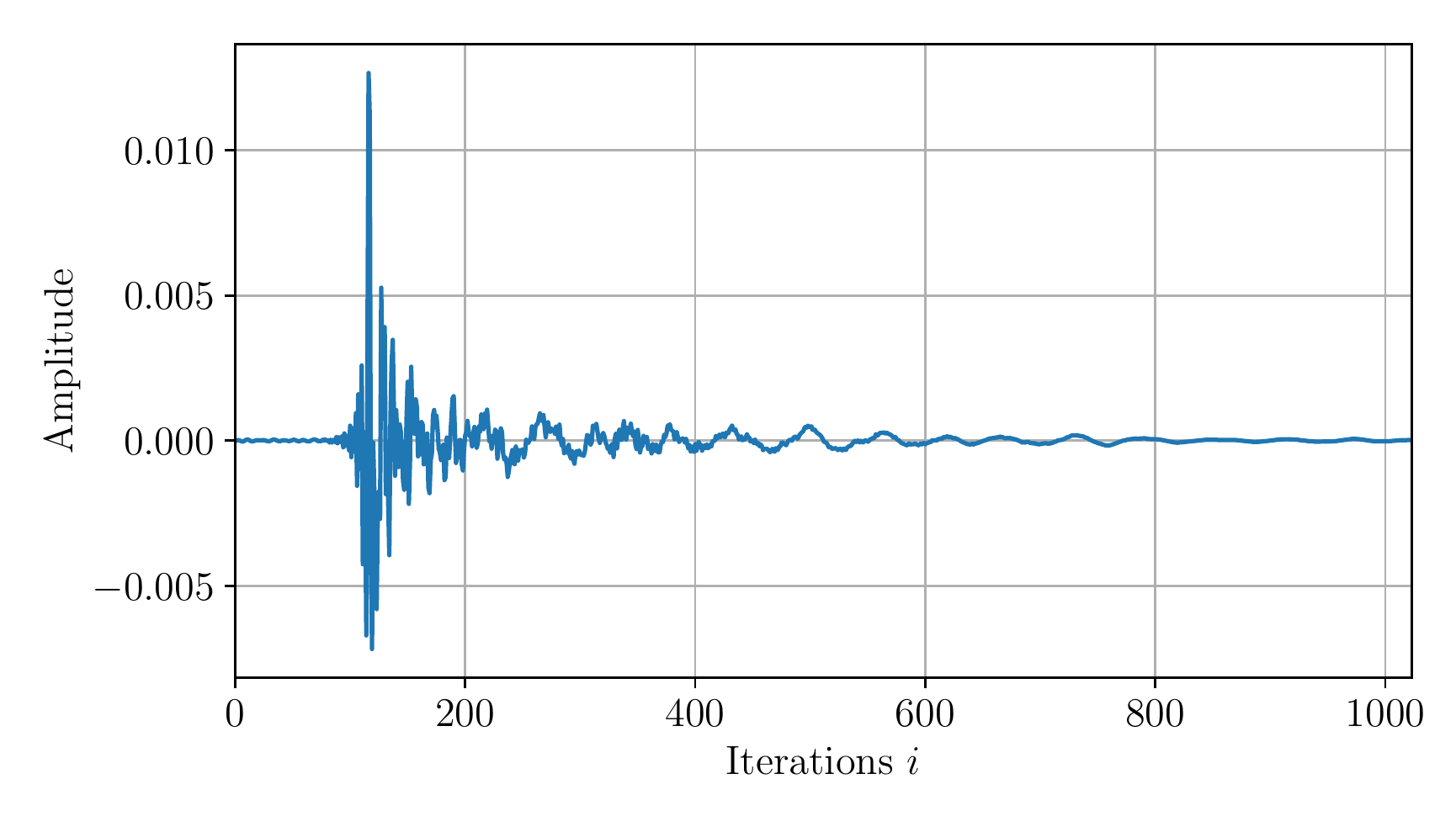}
	\caption{Measured acoustic impulse response to be used in the simulations.}
	\label{fig:impulse_resp}
\end{figure}

\subsection{Correlation between the a priori error and the noise}
We first explore the validity of the results from Theorem \ref{teo:main} about the  $\Ex\left[\e_{a,i} \ve_i^H\right]$. These results are valid for the transient and steady-state conditions of the algorithm, and for any time varying step-size sequence. In general, variable step-size algorithms start from a large step-size which becomes smaller during convergence, so we choose a  sequence with this behavior. Any deterministic sequence can be used; in particular we consider the following expression:
\begin{equation}
	\mu_i = \left[1+(C\times i)^2\right]^{-1}, \label{eq:museq}
\end{equation}
where in our case we choose $C=7\times 10^{-4}$. The time evolution of this sequence can be seen in Fig. \ref{fig:mu_seq}.
\begin{figure}
	\centering
	\includegraphics[width=.5\linewidth, trim=.5cm .8cm  0.6cm .7cm,clip]{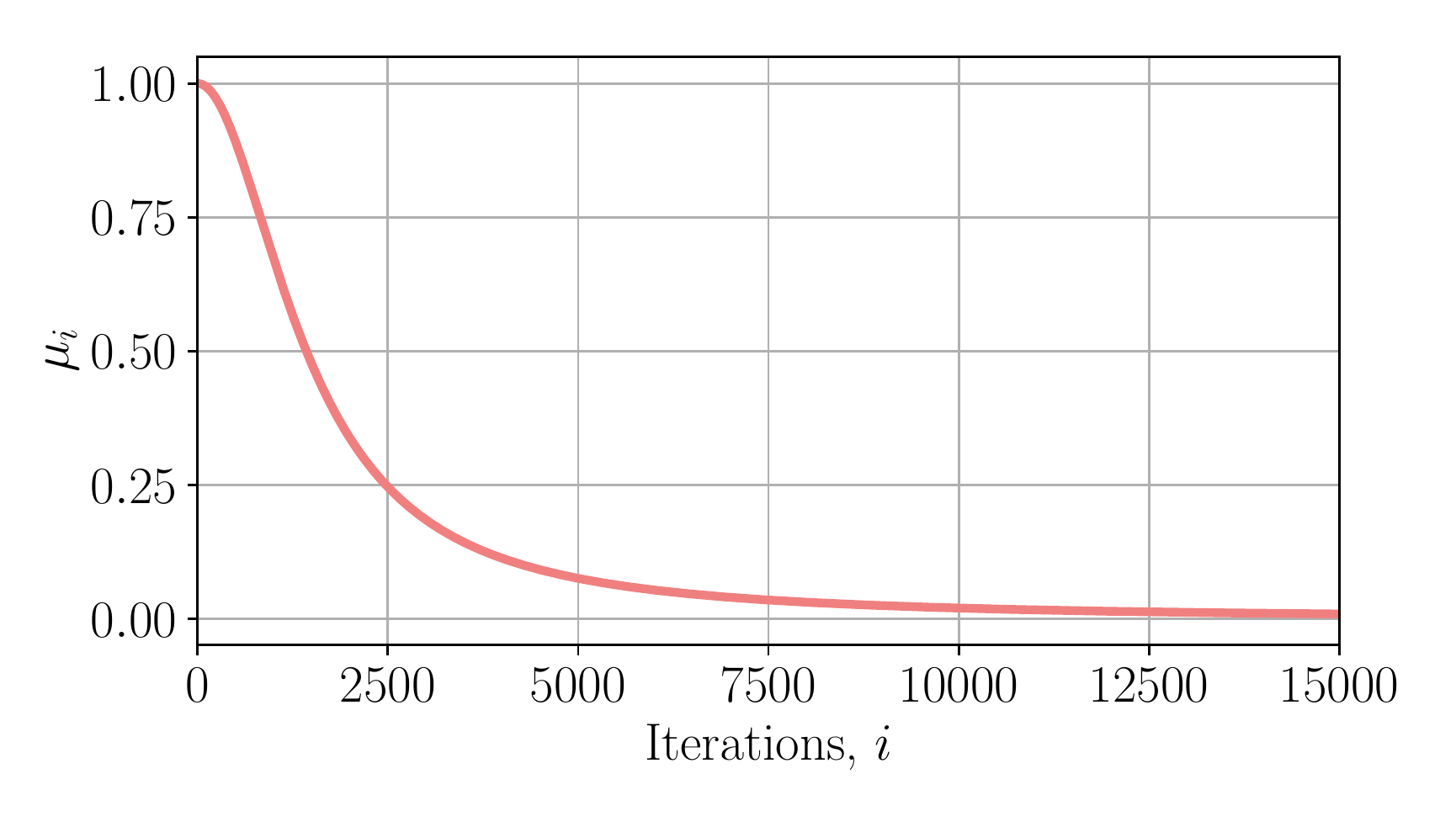}
	\caption{Sequence of step size $\mu$ to use in simulations, given by (\ref{eq:museq}) with $C=7\times 10^{-4}$.}
	\label{fig:mu_seq}
\end{figure}

In Fig. \ref{fig:diagEav} we compare the results of Theorem \ref{teo:main} for different input signals with $M=512$ and $K \leq 8$, where the condition  $M \gg K$ of Theorem \ref{teo:main} holds.  
In particular, in Fig. \ref{fig:eavk4m512white} we consider $K=4$ and a white input sequence. In this case, the matrix was shown to be diagonal, with diagonal values given by (\ref{aprox13}). We observe that the proposed expressions are very accurate. 
\begin{figure}
	\centering
{\includegraphics[width=.5\linewidth, trim=.7cm .8cm  0.1cm .6cm,clip]{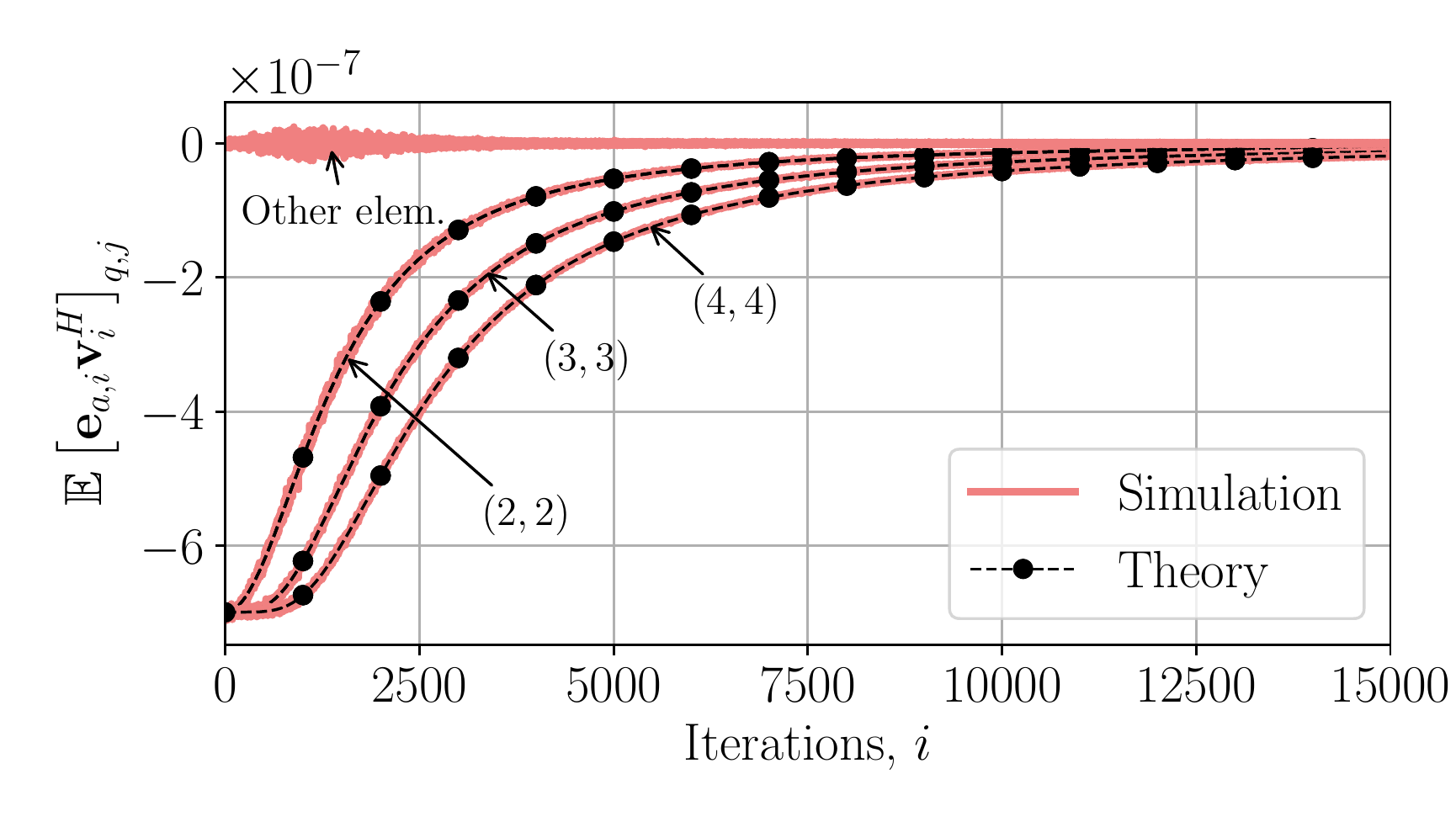}
	\subcaption{White input, $K=4$.} 	\label{fig:eavk4m512white}}
{\includegraphics[width=0.5\linewidth, trim=.7cm .8cm  0.1cm .6cm,clip]{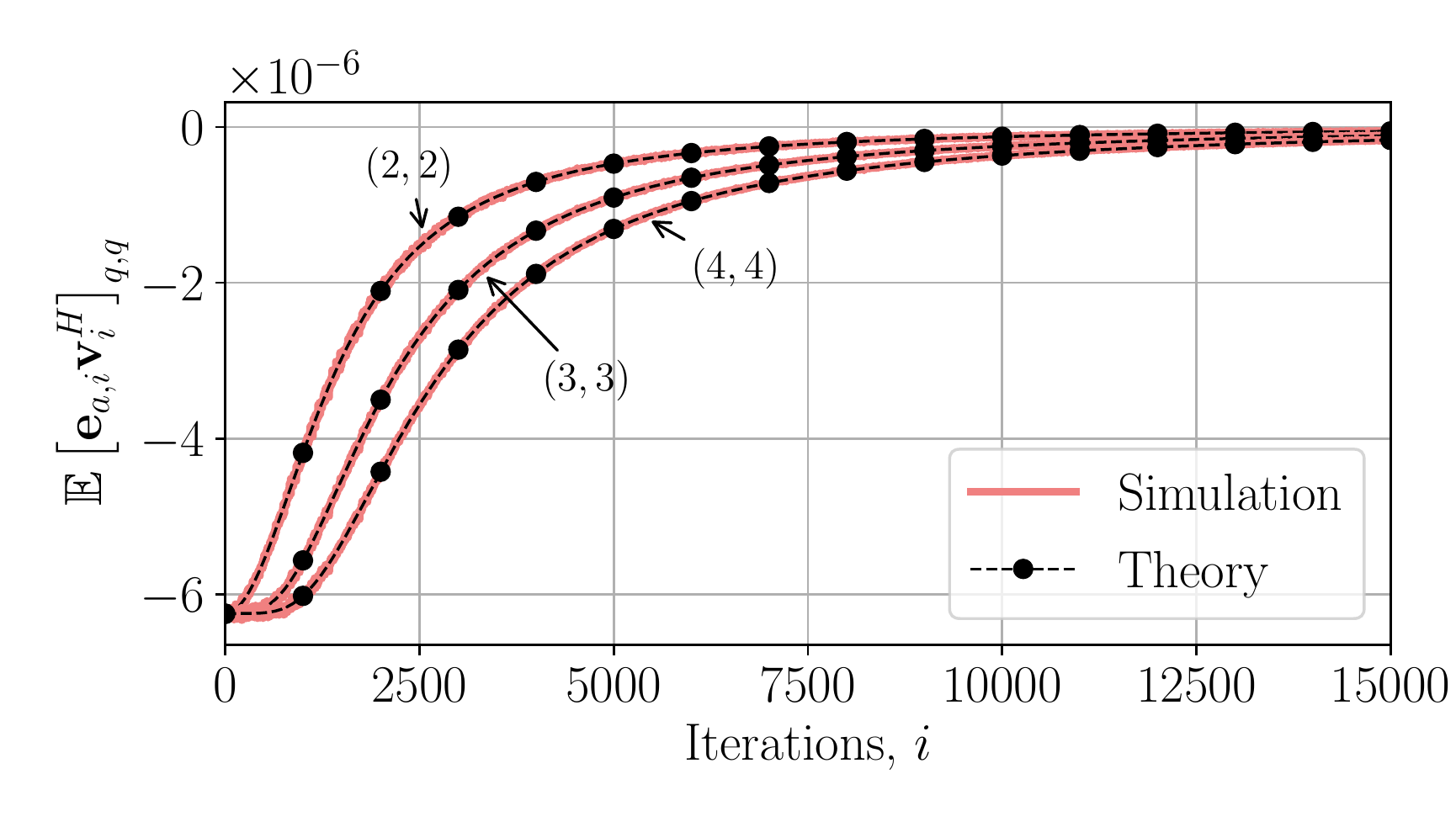}
	\subcaption{ARMA(2,2) input, $K=4$.} \label{fig:diageavk4m512arma22}}
{\includegraphics[width=.5\linewidth, trim=.7cm .8cm  0.1cm .6cm,clip]{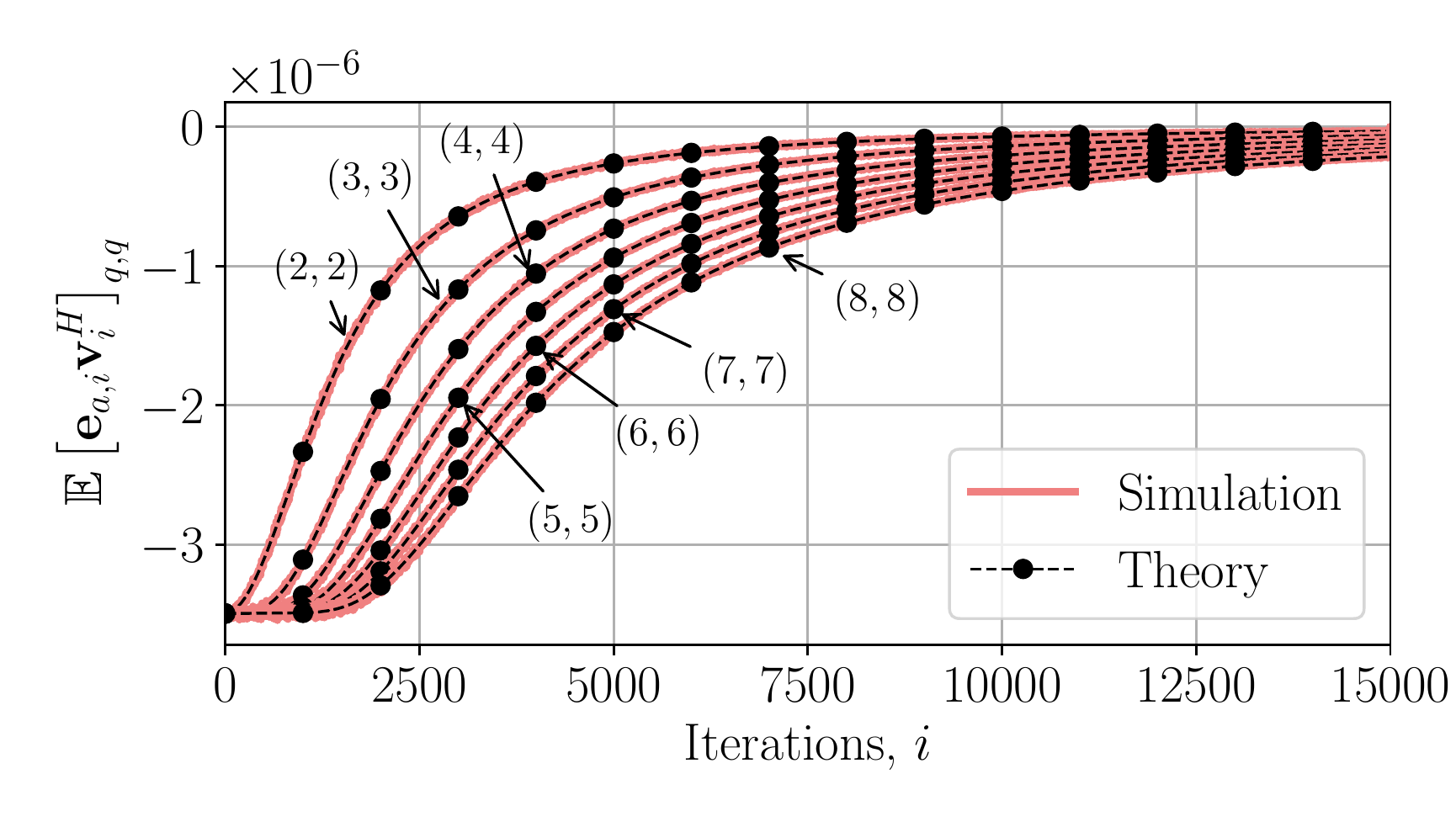}
	\subcaption{AR1(0.95) input, $K=8$.} \label{fig:diageavk8m512ar1}}
	\caption{Comparison of Monte Carlo simulations of $\Ex\left[\e_{a,i}\ve_i^H \right]$ compared to (\ref{aprox13}) and other results of Theorem \ref{teo:main} for different input processes. $M=512$. $\text{SNR}=30$dB. $J=10^5$.} \label{fig:diagEav}
\end{figure}
In Figs. \ref{fig:diageavk4m512arma22} and \ref{fig:diageavk8m512ar1} we consider $K=4$ with ARMA(2,2) input, and $K=8$ with AR1(0.95) input, respectively. For this case, $\diag\Ex\left[\e_{a,i}\ve_i^H \right]$ is known to be an upper triangular matrix with main diagonal values given by (\ref{aprox13}). In both cases we see that the simulations are in excellent agreement with the theoretical expression, even for the highly correlated ARMA(2,2) process. We do not plot the first element of the diagonal since we know that it is zero.

In Fig. \ref{fig:diagEav2} we consider cases in which $M \gg K$ does not hold. For conciseness, we focus on colored inputs. In  particular we consider $M=32$ with $K=4$ and AR1(0.95) input, or $K=8$ and ARMA(2,2) input. We can see that (\ref{aprox13}) is also very accurate in this case, and there seems to be no degradation in the accuracy from the previous case. As a general conclusion, we can see that the expressions put forth in Theorem \ref{teo:main} are  accurate.
\begin{figure}[t!]
	\centering
	{	\includegraphics[width=.5\linewidth, trim=1cm .8cm  0.1cm .6cm,clip]{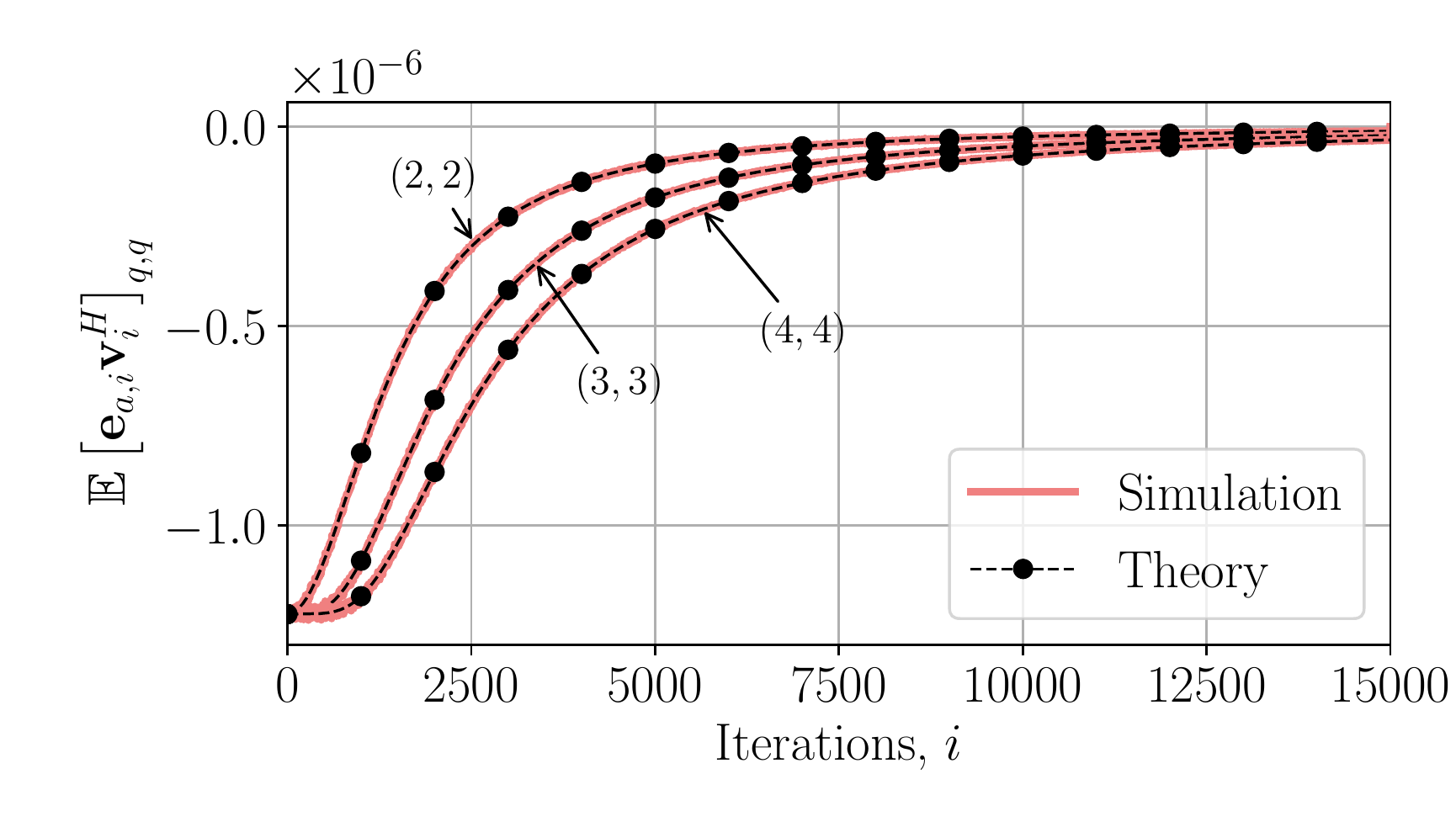}
		\subcaption{AR1 (0.95), $K=4$.}}
	{
	\includegraphics[width=.5\linewidth, trim=.7cm .8cm  0.1cm .6cm,clip]{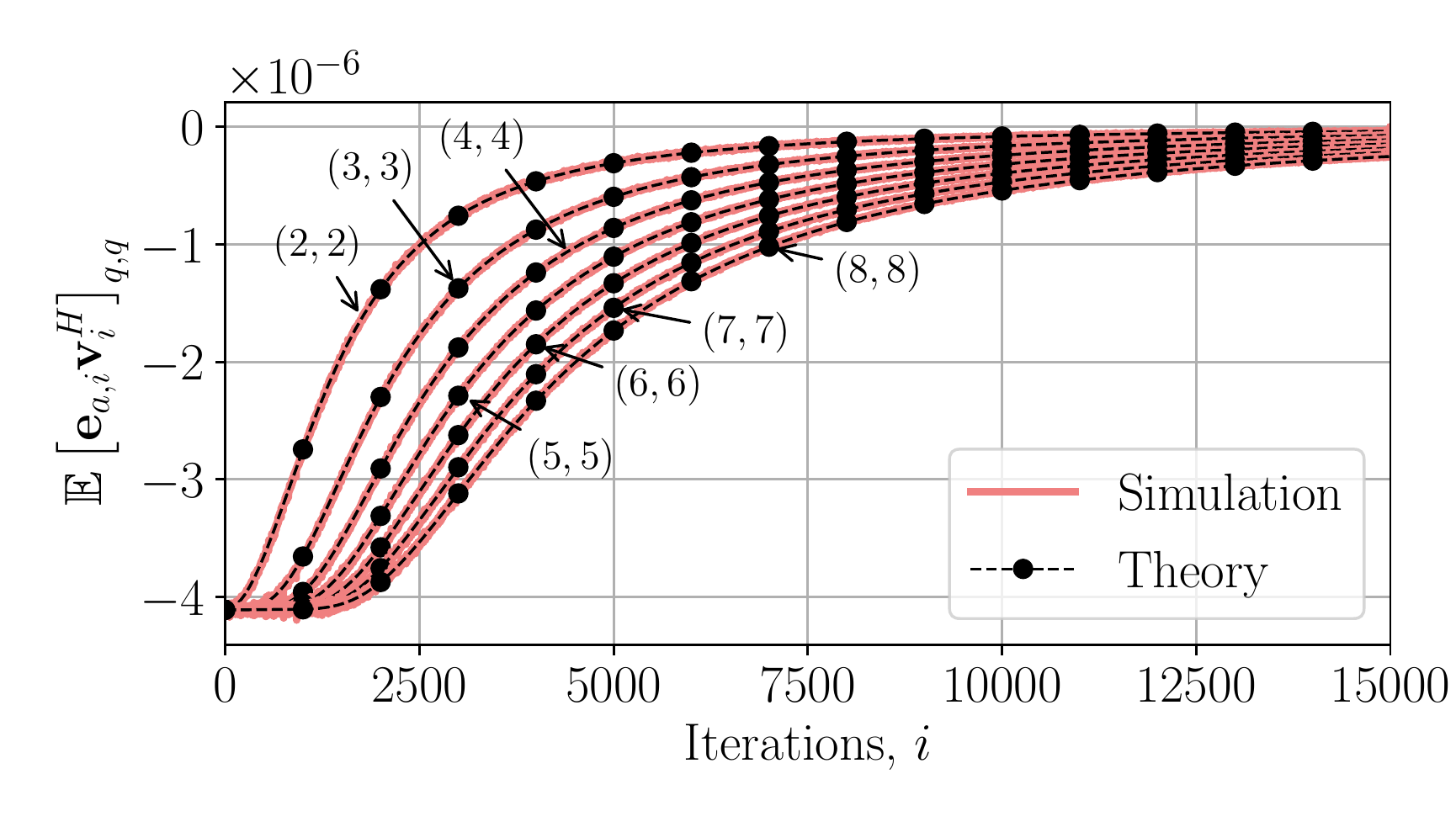}
	\subcaption{ARMA(2,2) input, $K=8$.} }
\caption{Comparison of Monte Carlo simulations of $\Ex\left[\e_{a,i}\ve_i^H \right]$ compared to (\ref{aprox13}) and other results of Theorem \ref{teo:main} for different input processes. $M=32$. $\text{SNR}=30$dB. $J=10^5$.}  \label{fig:diagEav2}
\end{figure}

%
\vspace{-3mm}

\subsection{Steady-state error analysis}

In this section we explore the validity of the steady-state expressions of Section \ref{sec:error}. First, we explore the validity of the steady-state condition (\ref{eq:newap}). To do this, we simulate (\ref{eq:normaee2}), which was obtained directly from (\ref{eq:newap})  and is more interesting.
We like to mention that (\ref{eq:normaee2})  was previously developed for white input signals in \cite{RRTB_2007} for any value of $\mu$ so we are only interested in showing that it is also a good approximation  for any input process as long as $\mu \ll 1$.  In Fig. \ref{fig:Enorme} we compare (\ref{eq:normaee}) for the case of the AR1(0.95) and the ARMA(2,2) processes, for different values of $K$ and $M$. Each curve was obtained by averaging $J=400$ runs of the error to convergence and averaging the last 2000 ($M=512$) or 200 samples ($M=32,64$) of the steady-state value. We can see that when $M$ larger than $K$, the approximation is very good for small $\mu$, even for very correlated signals such as the ARMA(2,2) process. This can be seen when $M=512$ and for all the proposed $K$, and when $M=32$ and $K=2,4$. The case when the greater divergence occurs is when $M=32$ and $K=8$, when $M$ is closer to $K$. For this we have included also the case $M=64$ to show that the formula becomes quickly more accurate when increasing $M$. Overall we can see that (\ref{eq:ap}) is a good approximation when $\mu \ll 1$ , even for very correlated signals, as long as $M$ is reasonably larger than $K$ ($M \geq 4K$ could be a good rule of thumb).  
\begin{figure}
	\centering
	{
	\includegraphics[width=.5\linewidth, trim=.5cm 0.5cm  0.2cm 0.5cm,clip]{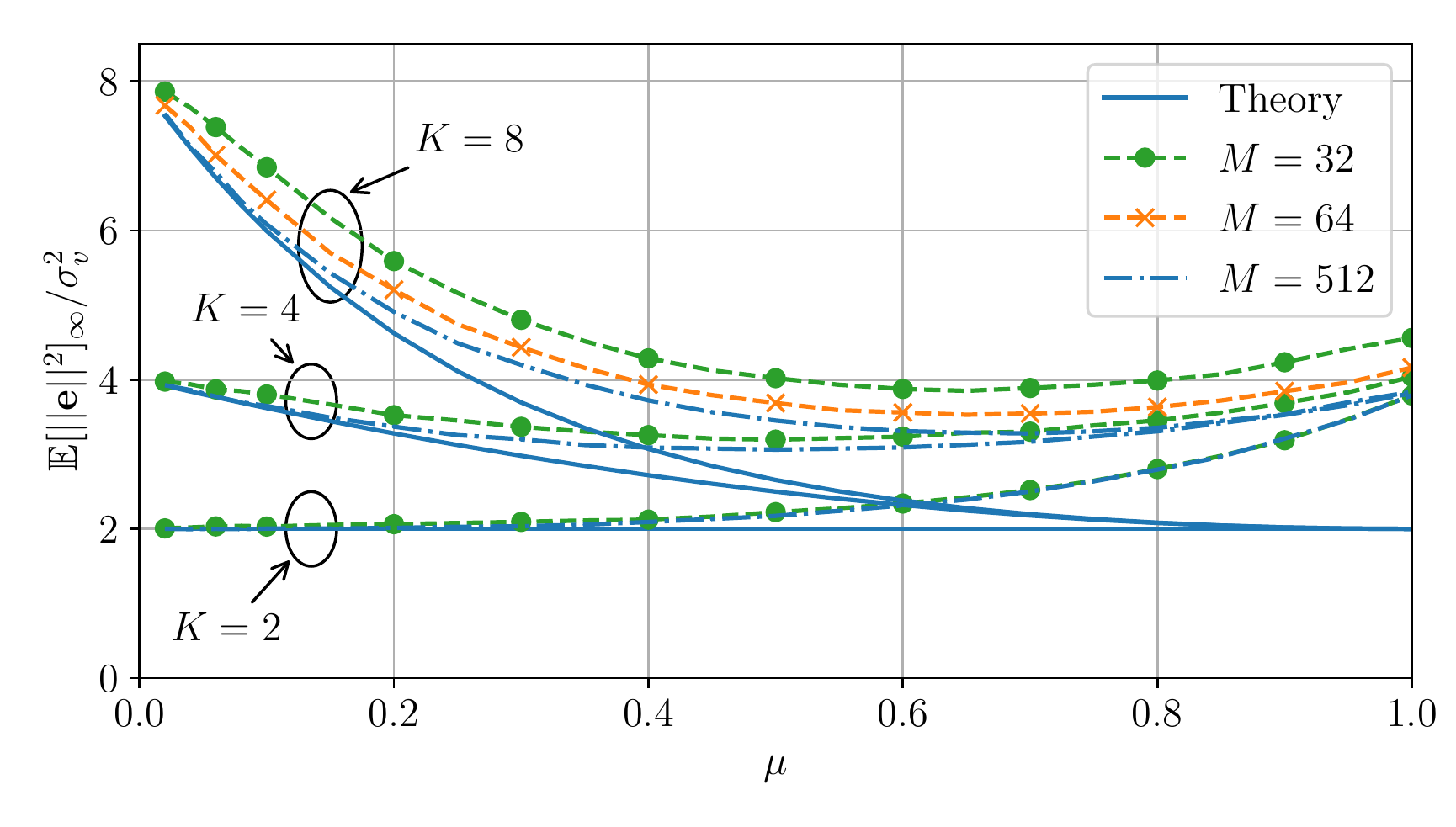}
	\subcaption{AR1(0.95)}}
	{
	\includegraphics[width=.5\linewidth, trim=.5cm 0.5cm  0.2cm 0.5cm,clip]{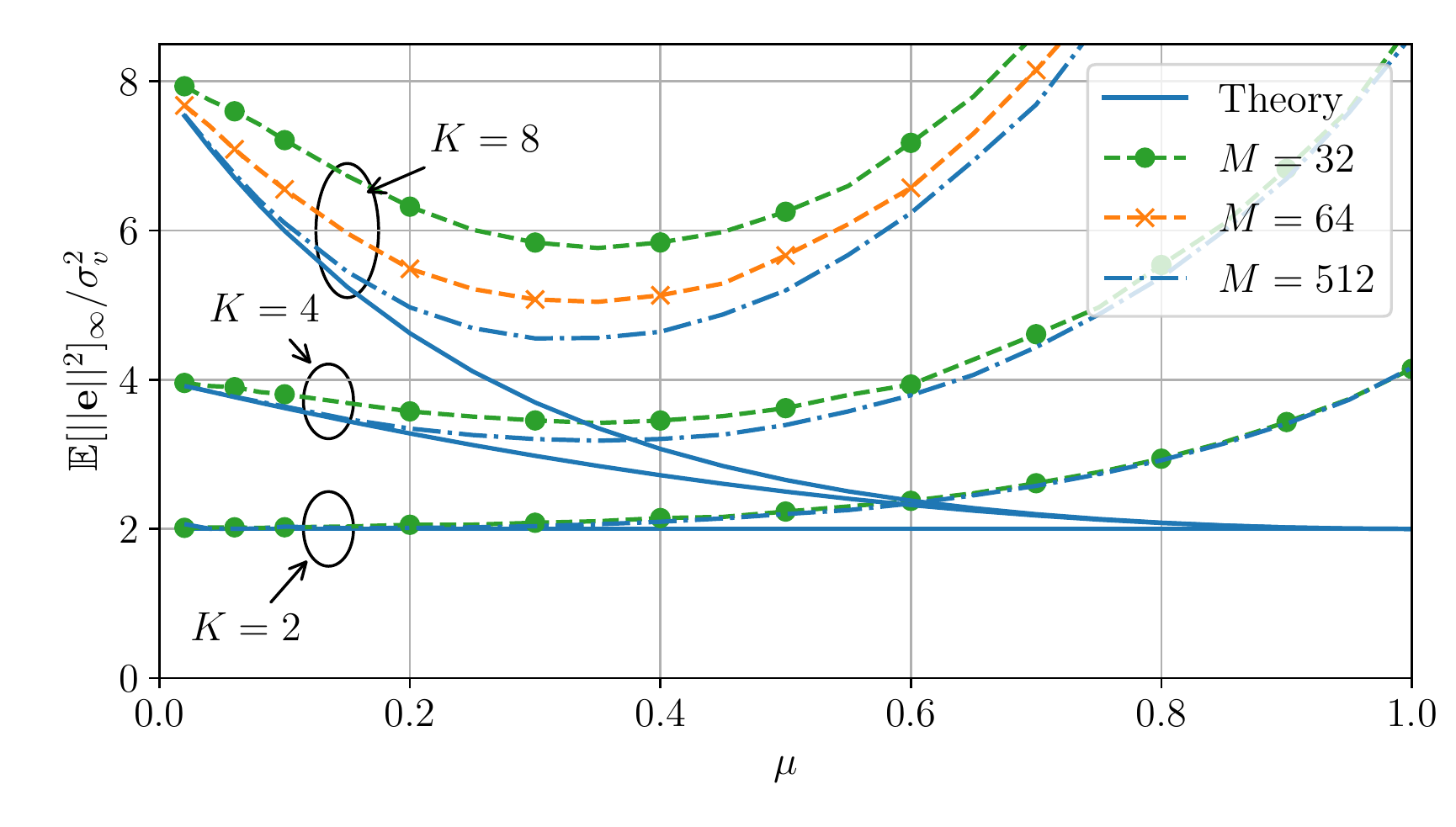}
	\subcaption{ARMA(2,2)}}

	\caption{Steady-state of the square norm of the error vector for different colored inputs. The theoretical expression is given by (\ref{eq:normaee2}). $\text{SNR}=30$dB. $J=400$.}
	\label{fig:Enorme}
\end{figure}

We now test the expression of the MSE obtained by taking $q=1$ in (\ref{eq:diagEeeH}) comparing it with simple expressions that have appeared in other works. We consider an expression valid for small $\mu$~\cite{SS_2004}:
\begin{equation}
	\text{MSE} =   \frac{2\sigma_v^2}{2-\mu}. \label{eq:MSE}
\end{equation} 
which is obtained using the independence assumption and others. 
We also consider an expression which accounts for the correlation between the noise and the error, is valid for any $\mu$, and depends on the second order properties of the input process~\cite{Ogunfunmi2011}:
\begin{equation}
	\text{MSE} = \sigma_v^2 + \frac{\mu \sigma_v^2}{2-\mu} \Ex\left[\frac{1}{||\x_i||^2}\right] \tr(\Ree_x)	 (1+ 2\Gamma). \label{eq:MSE2}
\end{equation}
In this equation:
\begin{equation}
	\Gamma = \sum_{i=1}^M p_i \sum_{q=1}^{K-1} \left[\left(1- \frac{(1-p_i)^q}{1- (1-p_i)^{K}}\right) (1-\mu)^q\right], \label{eq:MSE3}
\end{equation}
and $p_i = \lambda_i /\tr(\Ree_x)$, where $\lambda_i$ are the eigenvalues of $\Ree_x$. 

In Fig \ref{fig:MSE512} we plot the MSE for $M=512$ and different inputs and values of $K$. The plots where obtained by averaging $J=400$ runs of the MSE to convergence and averaging the last 2000 samples of the steady-state value. The behavior of the MSE is weakly dependent on the correlation of the input signal for a small value of $\mu$. As $\mu$ increases the MSE increases, faster for more correlated signals, and with increasing $K$.
The proposed expression (\ref{eq:diagEeeH}) is accurate in the regime $\mu\ll 1$ as expected, and it captures the dependence with $K$ in this regime. However, since it is independent of the input process statistics, it cannot follow the MSE as $\mu$ increases. On the other hand, (\ref{eq:MSE}) is a poor approximation since it does not even depend on $K$. Finally, it is interesting to observe that (\ref{eq:MSE2}) requires the computation of the eigenvalues of the correlation matrix and other statistics which depend on the input signal. However, the simulations show that the end result depends weakly on the correlation of the input signals, even when $\mu \to 1$, where the color of the signal affects the MSE strongly. The prediction of  (\ref{eq:diagEeeH}) is almost the same as (\ref{eq:MSE2}), while 
(\ref{eq:diagEeeH}) is very simple to compute.
\begin{figure}
	\centering
	{\includegraphics[width=.5\linewidth, trim=.5cm 0.5cm  0.5cm 0.5cm,clip]{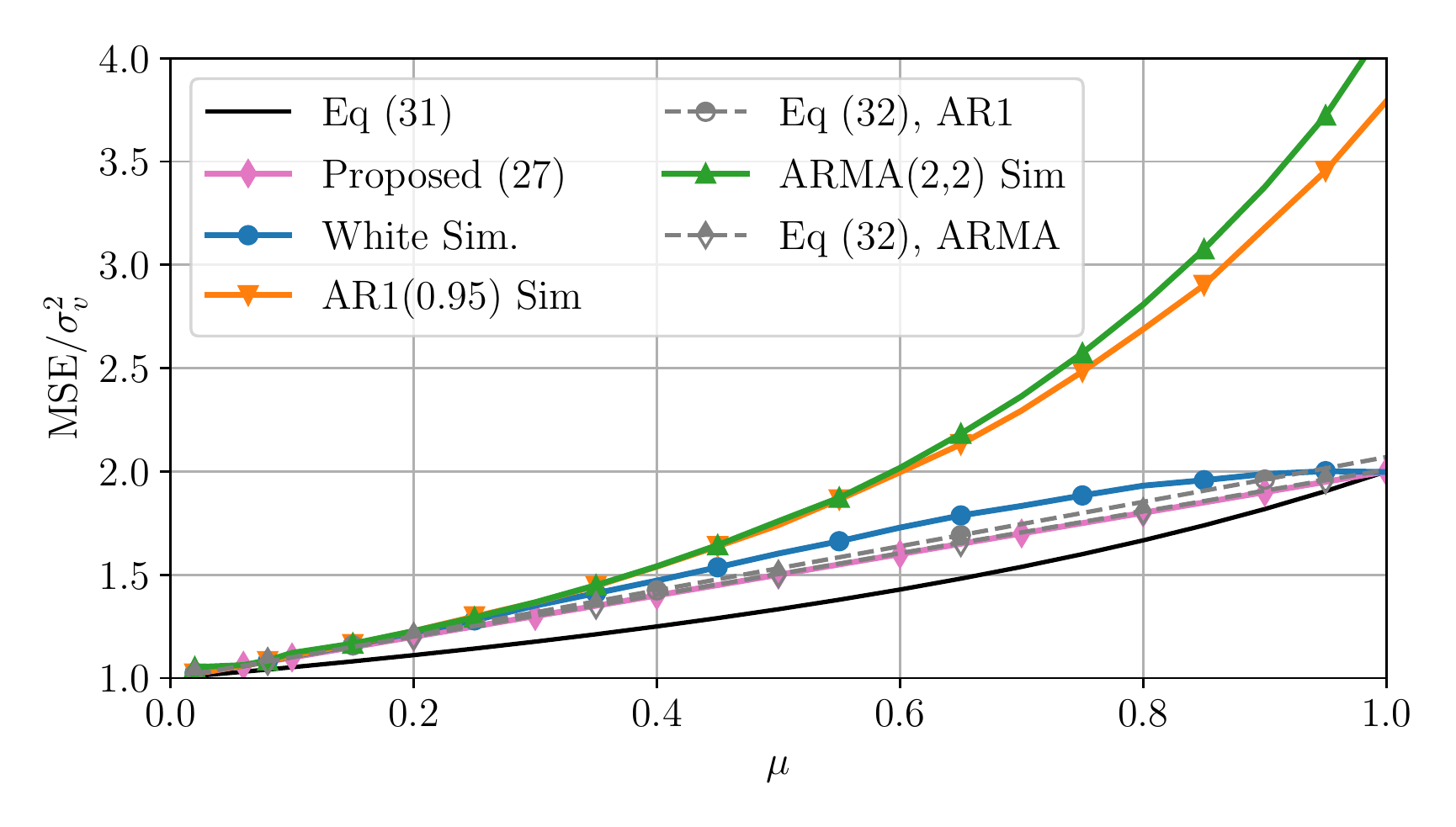}
	\subcaption{$K=2$.}}
	{\includegraphics[width=.5\linewidth, trim=.5cm 0.5cm  0.5cm 0.5cm,clip]{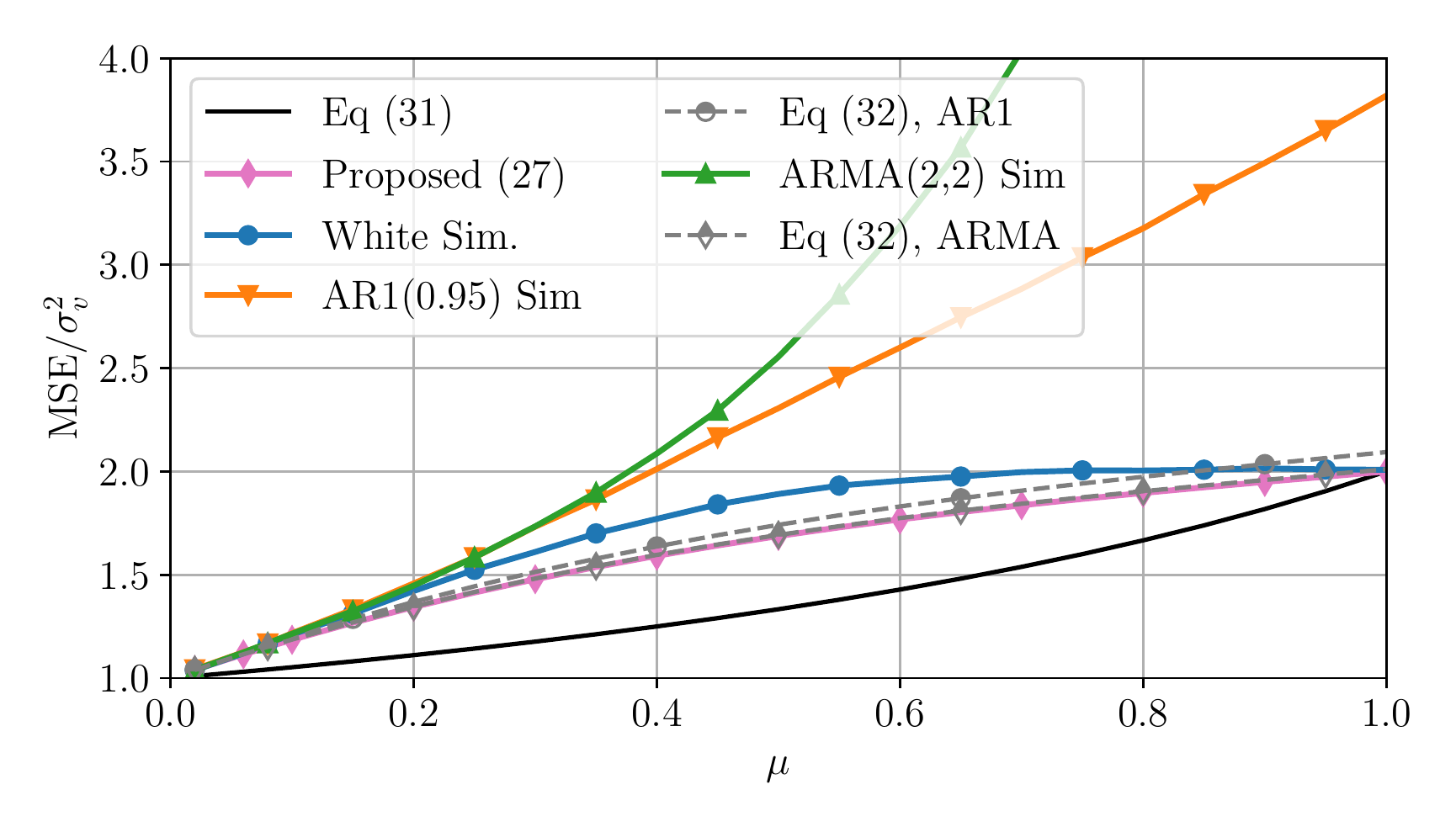}
	\subcaption{$K=4$.}}
	{\includegraphics[width=.5\linewidth, trim=.5cm 0.5cm  0.5cm 0.5cm,clip]{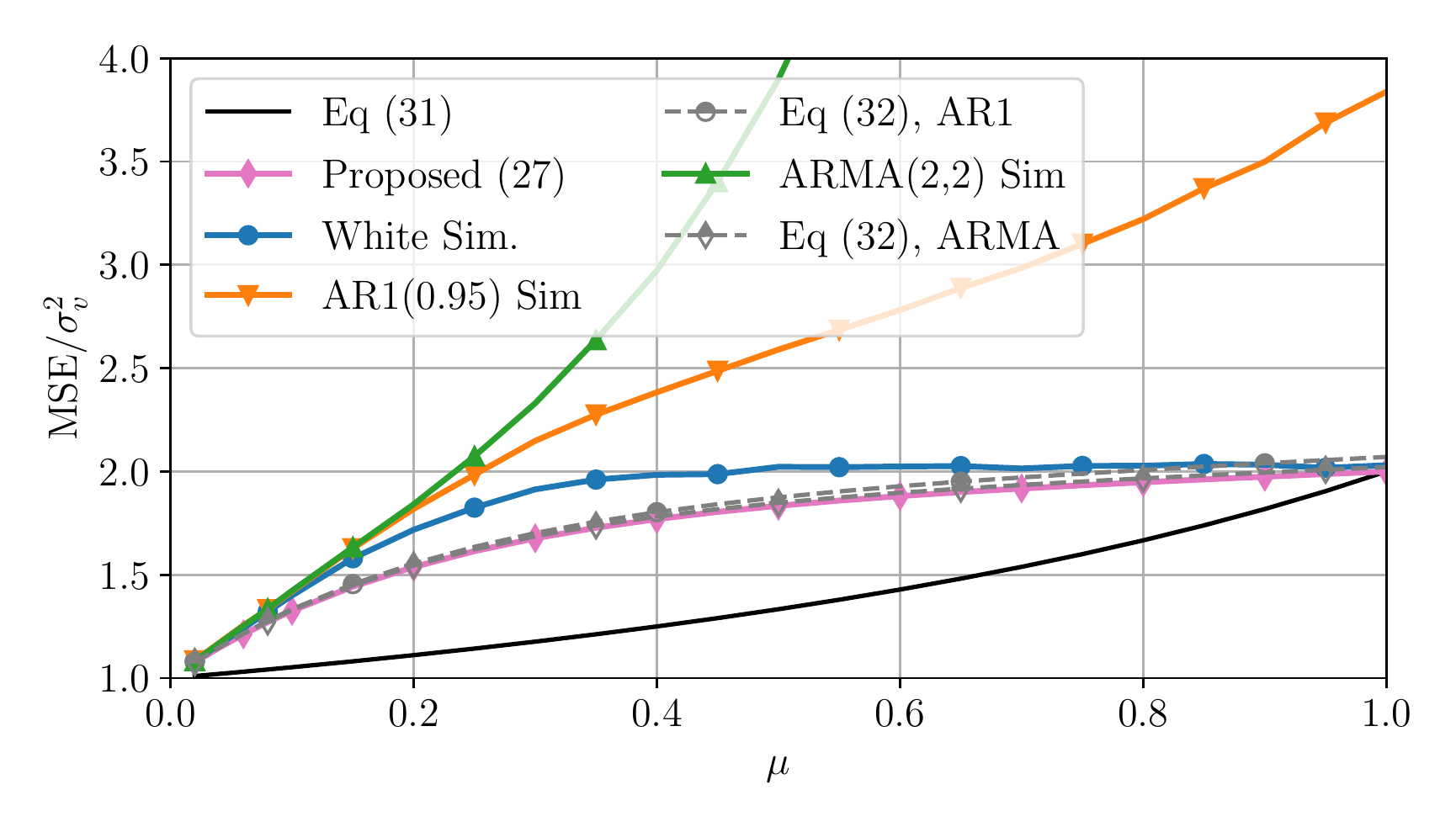}
	\subcaption{$K=8$.}}
\caption{Steady-state MSE for different inputs and values of $K$. The theoretical expression is given by (\ref{eq:diagEeeH}) for $q=1$. $M=512$. $\text{SNR}=30$dB. $J=400$.}
	\label{fig:MSE512}
\end{figure}

Finally, we explore the validity of (\ref{eq:diagEeeH}) to estimate the energy of the other components of the error vector. For this, we consider a white or AR1(0.95) input. The results for the ARMA(2,2) process are almost identical to those of the AR1(0.95). In Fig. \ref{fig:diagEeeH} we plot the estimated and simulated components of the energy of the errors for $K=4$, with $M=512$ and $M=32$. We see that the estimations are accurate for small $\mu$, which is were the approximations are obtained, and even for larger values of $\mu$ than expected. 

\begin{figure}
	\centering
	{\includegraphics[width=.5\linewidth, trim=.5cm 0.5cm  0.5cm 0.5cm,clip]{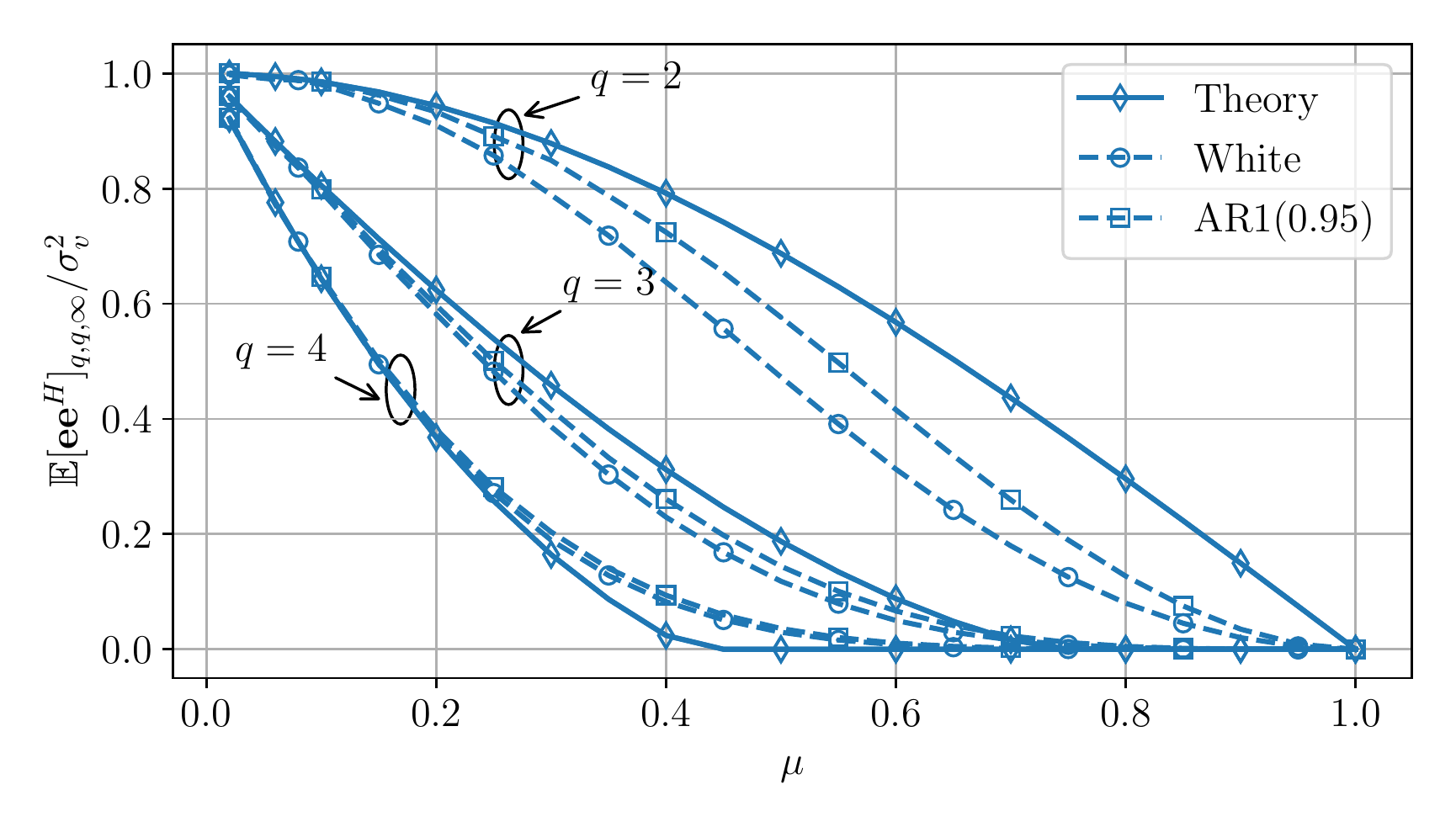}
		\subcaption{$M=512$.}}
	{\includegraphics[width=.5\linewidth, trim=.5cm 0.5cm  0.5cm 0.5cm,clip]{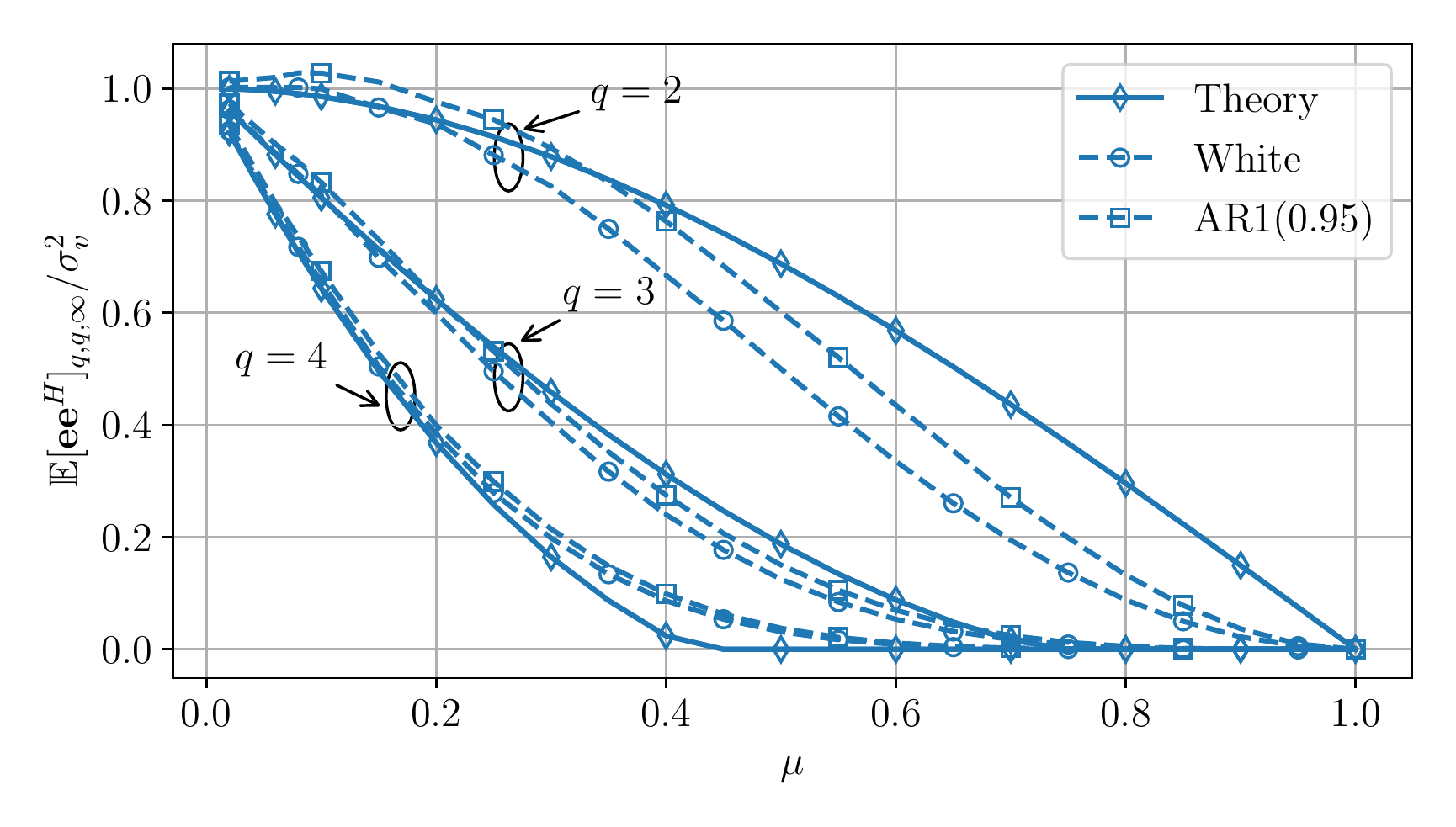}
		\subcaption{$M=32$.}}
	\caption{Steady-state energy of the error components. The theoretical expression is given by (\ref{eq:diagEeeH}). $\text{SNR}=30$dB. $J=240$.}
	\label{fig:diagEeeH}
\end{figure}

\section{Conclusion} \label{sec:closing}
In this paper we perfomed an analysis of some elements of the correlation matrix between the noise and \emph{a priori} error vector under mild assumptions. This analysis was done in the transient and steady-state of the algorithm. We applied this analysis to derive a new analysis for the steady state behavior of the algorithm in the small step-size regime. We have provided  closed-form formulas for the MSE of the algorithm which are simpler and with comparable accuracy than existing formulas which are more complex. Also we have provided approximate expressions for the energy of the other components of the error vector.  Using this initial correlation analysis it might be possible to draw further conclusions on the remaining elements of the correlation matrix. This could lead to a new steady-state characterization for correlated signals, and possibly, to a new transient state analysis as well. 

\section*{Acknowledgment}
The authors would like to thank Dr. Hernan Rey for the insightful discussions during the course of this work.

\appendices

\section{Proof of Theorem \ref{teo:main}}
We first present two subsections with preliminary results, and the proof itself is in Appendix \ref{sec:proofTEO}.
\subsection{Auxiliary linear algebra results}
We present some auxiliary definitions and properties  which will simplify the main proof of the theorem. The proofs in this section are straightforward and are omitted.
\begin{definition}
Let $\mathcal{F}_m$ be the set of matrices in $\C^{K\times K}$ whose last $K-m$ rows are zero:	
\begin{equation}
\mathcal{F}_m := \left\{\mathbf{A} \in \C^{K \times K} : \left[\mathbf{A}\right]_{i,j} = 0 \ \text{if $m < i \leq K$} \right\}.
\end{equation}
\end{definition}
\begin{definition} 
Let $\mathcal{C}_m$ be the set of matrices in $\C^{K\times K}$ whose first $m$ columns are zero:
\begin{equation}
	\mathcal{C}_m := \left\{\mathbf{A} \in \C^{K \times K} : \left[\mathbf{A}\right]_{i,j} = 0 \ \text{if $1 \leq j \leq m$}\right\}.
\end{equation}
\end{definition}
The following properties on the products of matrices hold:
\begin{propiedad}\label{PropF}
	If $\mathbf{A} \in \mathcal{F}_m$ and  $\Ree \in \C^{K \times K}$, then $\mathbf{A} \Ree \in \mathcal{F}_m$. 
\end{propiedad}
\begin{propiedad}\label{PropC}
	If $\mathbf{A} \in \mathcal{C}_m$ and $\Ree \in \C^{K \times K}$ then $ \Ree \mathbf{A}\in \mathcal{C}_m$. 
\end{propiedad}
\begin{propiedad}	\label{PropFC}
	If $\mathbf{A} \in \efe_n$, $\mathbf{B} \in \ce_m$ and $\Ree \in \C^{K \times K}$, then from Properties \ref{PropF} and \ref{PropC} we have that $\mathbf{A} \Ree \mathbf{B} \in \efe_n \cap \ce_m$. 
\end{propiedad}
\begin{propiedad}\label{CorrI}
	If $\mathbf{A} \in \efe_m$ and $j \in \mathbb{Z}$, $j < 0$ then $\Imonio_{K,j} \mathbf{A} \in \efe_{m-j}$, with $\Imonio_{K,j}$ given by (\ref{eq:Imonio}).
\end{propiedad}
\begin{propiedad}\label{diag0}
	If $\Ree \in \efe_m \cap \ce_m$ then $\Ree$ is an upper-triangular matrix, and $\diag\left(\Ree\right) = \mathbf{0}$. 
\end{propiedad}
\begin{definition} Let $\Id_{K,m}$ be a diagonal matrix such that the first $m$ elements in the diagonal are 0, and the rest are 1:
\begin{equation}
\left[\Id_{K,m}\right]_{i,j} = \begin{cases}  1 & \mbox{ if } i=j \mbox{ and } i \gt m \\ 0 & \mbox{otherwise}. \end{cases}
\end{equation}
Notice that $\Id_{K,0}$ is the $K\times K$ identity matrix.
\end{definition}
Consider the matrices $\Imonio_{K,m}$ given by (\ref{eq:Imonio}). When multiplying a matrix $\Imonio_{K, m}$ with another matrix it will move the rows of a matrix up or down and pad with zeros. In addition, $\Imonio_{K,0}$ is the $K\times K$ identity matrix.
We will use the following properties:
\begin{propiedad} 	\label{Imonio+-}
	For $1 \leq m \leq K-1$,  $\Imonio_{K,-m} \Imonio_{K,m} = \Id_{K,m}$.
\end{propiedad}
\begin{propiedad}\label{ProdImonio}
	If $m,n \in \{1,...,K-1\}$ then $\Imonio_{K,m} \Imonio_{K,n} = \Imonio_{K,n+m}$.  
\end{propiedad}

%
%
%
%

\subsection{A matrix decomposition for the shifted correlation matrix}
\label{Sec.descomp}
Consider a complex second order stationary process $x$. 
We may define a time-shifted autocorrelation matrix  $ \Ree_{x,-m} =  \Ex\left[[x_i,...,x_{i-K+1}]^* [x_{i-m},...,x_{i-m-K+1}]^T\right] \in \C^{K\times K}$. Taking $m=0$ gives the standard autocorrelation matrix. 
\begin{lemma} \label{lema:decomp}
For $ 0 < m < K$ the matrix $\Ree_{x,-m}$  can be written in terms of $\Ree_x$ as:
\begin{equation}
	\Ree_{x,-m} = \Imonio_{K,-m} \Ree_x + \M_m,\label{descompR}
\end{equation}
where $\M_m$ is some matrix in $\efe_m$. When $x$ is a white process, $\M_m$ is the zero matrix.
\end{lemma}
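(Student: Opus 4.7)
The plan is to verify the decomposition by computing both sides entrywise and exploiting the stationarity of $x$. Let $r(k) = \Ex[x_i^* x_{i+k}]$ denote the (time-invariant) autocorrelation sequence of the process. By stationarity, the $(q,p)$ entries of the matrices of interest are
\begin{equation}
[\Ree_x]_{q,p} = r(q-p), \qquad [\Ree_{x,-m}]_{q,p} = r(q-p-m).
\end{equation}
So the task reduces to showing that $\Imonio_{K,-m}\Ree_x$ agrees with $\Ree_{x,-m}$ on all rows from $m+1$ to $K$, and that the discrepancy on the first $m$ rows is whatever matrix $\M_m$ I need.

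First I would compute $\Imonio_{K,-m}\Ree_x$ using the sparsity of $\Imonio_{K,-m}$: from the definition in (\ref{eq:Imonio}), the only nonzero entry in row $q$ of $\Imonio_{K,-m}$ is a $1$ in column $q-m$, provided $q-m \geq 1$. Hence row $q$ of $\Imonio_{K,-m}\Ree_x$ is row $q-m$ of $\Ree_x$ when $q \geq m+1$, and is the zero row when $q \leq m$. Written entrywise,
\begin{equation}
[\Imonio_{K,-m}\Ree_x]_{q,p} = \begin{cases} r(q-m-p) & q \geq m+1, \\ 0 & q \leq m. \end{cases}
\end{equation}
Subtracting from $[\Ree_{x,-m}]_{q,p} = r(q-p-m)$, the difference vanishes for $q \geq m+1$, and in the top $m$ rows it equals $r(q-p-m)$. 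Thus
\begin{equation}
\M_m := \Ree_{x,-m} - \Imonio_{K,-m}\Ree_x
\end{equation}
has its last $K-m$ rows identically zero, so $\M_m \in \efe_m$ by definition, which is exactly the claim.

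For the white case I would simply observe that, in the surviving entries $[\M_m]_{q,p} = r(q-p-m)$ with $1 \leq q \leq m$ and $1 \leq p \leq K$, the argument satisfies $q-p-m \leq m - 1 - m = -1 < 0$. For a white process $r(k) = \sigma_x^2 \delta_{k,0}$, so every such entry is zero and $\M_m = \mathbf{0}$. I do not anticipate any genuine obstacle here; the only thing to be careful about is correctly lining up the shifted indices in the definition of $\Imonio_{K,-m}$ with the row/column conventions of $\Ree_x$ and $\Ree_{x,-m}$, which is why I would write everything in terms of the scalar autocorrelation $r(\cdot)$ from the outset.
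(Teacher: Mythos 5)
Your proof is correct and is essentially the argument the paper has in mind: the paper omits the proof but describes exactly this mechanism ($\Imonio_{K,-m}$ shifting the rows of $\Ree_x$ downward with zero padding on top, and $\M_m$ supplying the first $m$ rows), which your entrywise computation via the scalar autocorrelation $r(\cdot)$ makes rigorous, including the white-noise case where the surviving entries have strictly negative lag.
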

The proof is straightforward and is omitted. Essentially $\Imonio_{K,-m}$ moves the rows of $\Ree_x$ downwards padding with rows of zeros from the top, while $\M_m$ completes the first $m$ rows of $\Ree_{x,-m}$.

%
%
%
%
%

\subsection{Expansion of a product of matrices as a sum} \label{app:expand}
We now rewrite the product 
\begin{equation}
	\prod_{k=1}^{j-1} (\Id-\Ge_{k}) = (\Id - \Ge_1)(\Id - \Ge_2) ... (\Id - \Ge_{j-1}),
\end{equation}
which appears in (\ref{eavwhite2}) as a summation. 
Each of its terms can be obtained by selecting from each $(\Id - \Ge_k)$ either the identity  or  the  $\Ge_k$ matrix. Since the identity matrices do not affect the product, we can represent each term in the summation as a list of ordered indexes which correspond to the $\Ge_k$ matrices which appear in the summation term. Additionally, each term of the summation will be multiplied by $(-1)$ if there is an odd number of $\Ge_k$ matrices involved. 
The $p$-th term will be represented by an ordered list of indexes:
\begin{equation}
\mathcal{I}_p = [ \sigma_{1,p},...,\sigma_{\#I_p,p}],
\end{equation}
and the matrix product will be written as:
\begin{align}
\hspace{-1mm}	\prod_{k=1}^{j-1} (\Id-\Ge_{k}) &= \Id +\sum_p (-1)^{\#\mathcal{I}_p} \prod_{ n = 1}^{\#I_p} \Ge_{i-\sigma_{n,p}}
	\\ &= \Id +\sum_p (-1)^{\#I_p} \prod_{ n = 1}^{\#\mathcal{I}_p} \X_{\sigma_{n,p}} \ese_{\sigma_{n,p}}\X_{\sigma_{n,p}}^H. \label{eq:expansion}
\end{align}
The summation will contain $2^{j-1}-1$ terms and any indexed list $\mathcal{I}_p$ will satisfy the following properties:
\begin{enumerate}
	\item $\sigma_{i,p} \in \N \ \  \forall i>K$.
	\item $ \# \mathcal{I}_p < j$.
	\item	$i > \sigma_{1,p} > \sigma_{2,p} > ... > \sigma_{ \# \mathcal{I}_p,p} > i-j$.
\end{enumerate}

\subsection{Proof of Theorem \ref{teo:main}} \label{sec:proofTEO}
In order to prove the theorem we need to start from (\ref{eavwhite2}).  We focus on $K>2$, and leave $K=2$ for the reader. 
Using (\ref{eq:expansion}) in (\ref{eavwhite2}) and rearranging the terms we obtain ($K>2$):
\begin{multline} \small
	\Ex\left[\e_{a,i}\ve_i^H \right]  =  - \sigma_v^2 \sum_{j=1}^{K-1}\Ex\left[\X_i^H \J_j\right] \Imonio_{K,j}\\
	- \sigma_v^2\sum_{j=2}^{K-1} \sum_p (-1)^{\#\mathcal{I}_p}\Ex\left[\X_i^H \left(\prod_{\sigma_n \in \mathcal{I}_p}  \X_{\sigma_n} \ese_{\sigma_n}\X_{\sigma_n}^H \right) \J_j \right]  \Imonio_{K,j}.  \label{eavwhite5}
\end{multline}
The coefficients $\sigma_{n}$ are a function of $p$ also, but to simplify the notation, we do not make this dependence explicit, that is, $\sigma_{n,p} \equiv \sigma_n$. Likewise we do not include the limits of the summation in $p$ because it is not used. Now we simplify both terms of these expressions using the preliminary results of this appendix. 

The first term on the right side of (\ref{eavwhite5}) can be written as:
\begin{equation}
\hspace{-.5mm}	- \sigma_v^2 \sum_{j=1}^{K-1} \Ex\left[\X_i^H \J_j \right] \Imonio_{K,j} \hspace{-.5mm}	\approx \hspace{-.5mm} - \sigma_v^2 \sum_{j=1}^{K-1}\mu_{i-j} \Ree_{x,-j} \Ree_x^{-1} \Imonio_{K,j}  \hspace{-5mm}\label{aprox1}
\end{equation}
using (\ref{eq:approxR}). Now we have the following lemma:
\begin{lemma} \label{lema:exp1}
The right side of (\ref{aprox1})  can be written as:
\begin{equation} \label{eq:term1} 
\hspace{-.5mm}	\sum_{j=1}^{K-1} \hspace{-.5mm}\mu_{i-j} \Ree_{x,-j} \Ree_x^{-1} \Imonio_{K,j} \hspace{-.5mm} = \hspace{-.5mm}\sum_{j=1}^{K-1} \hspace{-.5mm} \mu_{i-j} \hspace{-.5mm} \left(\Id_{K,j} + \M_j \Ree_x^{-1} \Imonio_{K,j}\right)
\end{equation}
for  certain matrices $\M_j \in \efe_j$. In addition, $\M_j \Ree_x^{-1} \Imonio_{K,j} \in \ce_j \cap \efe_j$.
\end{lemma}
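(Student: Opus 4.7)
The plan is to apply Lemma \ref{lema:decomp} inside each summand of the left-hand side and then use the elementary identities collected in Properties \ref{PropFC} and \ref{Imonio+-} to collapse the resulting products. The whole proof should amount to a direct substitution, once the correct structural properties are invoked; there is no hidden combinatorial difficulty, only some care needed in tracking which subspaces $\efe_n$, $\ce_m$ the factors live in.

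Concretely, I would first fix $1 \leq j \leq K-1$ and invoke Lemma \ref{lema:decomp} to write
\begin{equation}
\Ree_{x,-j} = \Imonio_{K,-j}\,\Ree_x + \M_j,
\end{equation}
with $\M_j \in \efe_j$ (and $\M_j = \mathbf{0}$ in the white case, which recovers the special case of the Theorem). Right-multiplying by $\Ree_x^{-1} \Imonio_{K,j}$ yields
\begin{equation}
\Ree_{x,-j}\,\Ree_x^{-1}\,\Imonio_{K,j} = \Imonio_{K,-j}\,\Imonio_{K,j} + \M_j\,\Ree_x^{-1}\,\Imonio_{K,j}.
\end{equation}
By Property \ref{Imonio+-}, the first term equals $\Id_{K,j}$. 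Multiplying by $\mu_{i-j}$ and summing over $j$ from $1$ to $K-1$ gives exactly the right-hand side of \eqref{eq:term1}, which proves the main identity.

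It remains to check the structural claim $\M_j \Ree_x^{-1} \Imonio_{K,j} \in \efe_j \cap \ce_j$. By construction, $\M_j \in \efe_j$. On the other hand, $\Imonio_{K,j}$ has $[\Imonio_{K,j}]_{q,p} = 1$ only when $p = q + j$, so its first $j$ columns are identically zero and hence $\Imonio_{K,j} \in \ce_j$. Applying Property \ref{PropFC} with $\mathbf{A} = \M_j$, $\Ree = \Ree_x^{-1}$ and $\mathbf{B} = \Imonio_{K,j}$ gives the desired membership. The only step requiring any attention is the correct bookkeeping of indices when applying Property \ref{Imonio+-}; beyond that, the lemma follows essentially by rearrangement.
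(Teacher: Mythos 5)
Your proof is correct and follows essentially the same route as the paper: decompose $\Ree_{x,-j}$ via Lemma \ref{lema:decomp}, collapse $\Imonio_{K,-j}\Imonio_{K,j}$ to $\Id_{K,j}$ via Property \ref{Imonio+-}, and obtain the structural claim from Property \ref{PropFC}. Your explicit verification that $\Imonio_{K,j}\in\ce_j$ is a small but welcome addition that the paper leaves implicit.
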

\begin{proof}
 To show this, we use  Lemma \ref{lema:decomp}, to write $\Ree_{x,-j} = \Imonio_{K,-j} \Ree_x + \M_j$, where $\M_j \in \efe_j$. After replacing this in (\ref{aprox1}) we use Property \ref{Imonio+-} to show that $\Id_{K,j} = \Imonio_{K,-j} \Imonio_{K,j}$. Finally, notice that using Property \ref{PropFC} we know that $\M_j \Ree_x^{-1} \Imonio_{K,j} \in \ce_j \cap \efe_j$.
\end{proof}

Then for the second term on the right side of (\ref{eavwhite5}) we replace $\ese_{\sigma_n}$ and $\J_j$ in terms of $\mu_{\sigma_n}$, $\mu_{i-j}$, $\X_{\sigma_n}$ and $\X_{i-j}$.
We then define: $c(p) = \# \mathcal{I}_p$ and define $\tilde{\mathcal{I}}(p)= [\sigma_1,...,\sigma_{c(p)-1}]$ as the list in which the last element has been removed.
Then we use (\ref{eq:approxR}) to approximate:
\begin{itemize}
	\item $\X_i^H \X_{\sigma_1} \approx M \Ree_{x,\sigma_1-i}$.
	\item $ \X_{\sigma_n}^H \X_{\sigma_{n+1}} \approx M \Ree_{x,\sigma_{n+1} - \sigma_{n}}$.
	\item $\X_{\sigma_{c(p)}}^H \X_{i-j} \approx M \Ree_{x,i-j-\sigma_{c(p)}}$.
\end{itemize}
With this we can approximate:
\begin{multline} 
	\displaystyle \Ex\left[\X_i^H \left(\prod_{\sigma_n \in \mathcal{I}_p} \hspace{-2mm} \X_{\sigma_n} (\X_{\sigma_n}^H\X_{\sigma_n})^{-1}\X_{\sigma_n}^H\right) \J_j \right] \Imonio_{K,j}  \approx
	\Ree_{x,\sigma_1-i} \\ \hspace{-5mm} \times \Ree_x^{-1}\hspace{-1mm} \left[\prod_{ \sigma_n \in \tildI } \hspace{-2mm} \Ree_{x,\sigma_{n+1}-\sigma_n} \Ree_x^{-1}\right] \hspace{-1mm}\Ree_{x,i-j-\sigma_{c(p)}}\Ree_x^{-1} \Imonio_{K,j}. \hspace{-3mm} \label{aprox2}
\end{multline}
Although this expression is involved, it is possible to prove the following Lemma:
\begin{lemma} \label{lema:exp2}
	The left side of (\ref{aprox2}) can be written as:
\begin{equation}
		\Ree_{x,\sigma_1-i} \Ree_x^{-1} \left(\prod_{ \sigma_n \in \tildI} \Ree_{x,\sigma_{n+1}-\sigma_n} \Ree_x^{-1}\right)  \Ree_{x,i-j-\sigma_{c(p)}} \Ree_x^{-1} \Imonio_{K,j}
		= \Id_{K,j} + \te_{0,p,j}, \label{aprox4}
	\end{equation}
	where $\te_{0,p,j} \in \efe_j \cap \ce_j$\footnote{We write $\te_{0,p,j}$ to indicate they depend on the $j$-th  and $p$-th indexes of the summations in (\ref{eavwhite5}). During the proof we do not explicit the indexes $p,j$.}. For a white input, $\te_{0,p,j}$ is the null matrix.
	
\end{lemma}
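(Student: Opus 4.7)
The plan is to expand every shifted correlation matrix in (\ref{aprox4}) via Lemma \ref{lema:decomp} and then classify the resulting summands by their row and column sparsity. I would first introduce positive integers $m_0 = i-\sigma_1$, $m_n = \sigma_n - \sigma_{n+1}$ for $1 \leq n \leq c(p)-1$, and $m_{c(p)} = \sigma_{c(p)}-(i-j)$, so that each correlation matrix in the product becomes $\Ree_{x,-m_n}$. The ordering $i > \sigma_1 > \cdots > \sigma_{c(p)} > i-j$ forces each $m_n \geq 1$, and telescoping gives $\sum_{n=0}^{c(p)} m_n = j$. Applying Lemma \ref{lema:decomp} to each factor yields $\Ree_{x,-m_n}\Ree_x^{-1} = \Imonio_{K,-m_n} + \M_{m_n}\Ree_x^{-1}$ with $\M_{m_n} \in \efe_{m_n}$, and the product over $n = 0,\ldots,c(p)$ decomposes into $2^{c(p)+1}$ summands.

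Next, I would isolate the \emph{main} summand, obtained by picking $\Imonio_{K,-m_n}$ from every factor. Using the identity $\Imonio_{K,-a}\Imonio_{K,-b} = \Imonio_{K,-(a+b)}$, which holds whenever $a+b \leq K-1$ (guaranteed here because every partial sum of $\{m_n\}$ is bounded by $j \leq K-1$), this summand telescopes to $\Imonio_{K,-j}\Imonio_{K,j}$, and Property \ref{Imonio+-} reduces it to $\Id_{K,j}$. This is precisely the first term on the right of (\ref{aprox4}).

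For each of the remaining summands, I would let $n_0$ be the smallest index for which $\M_{m_{n_0}}\Ree_x^{-1}$ was chosen. Then the portion of the product strictly to the left of $\M_{m_{n_0}}$ consists only of $\Imonio$-type factors and collapses to $\Imonio_{K,-a}$ with $a = m_0 + \cdots + m_{n_0-1}$ (taking $a=0$ if $n_0=0$). Property \ref{CorrI} then gives $\Imonio_{K,-a}\M_{m_{n_0}} \in \efe_{m_{n_0}+a}$, and iterating Property \ref{PropF} through the remaining right multiplications keeps the whole summand in $\efe_{m_{n_0}+a} \subseteq \efe_j$ (since $m_{n_0}+a \leq j$). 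Meanwhile, the rightmost factor $\Imonio_{K,j}$ has its first $j$ columns zero, hence $\Imonio_{K,j} \in \ce_j$, and Property \ref{PropC} propagates this under all left multiplications, so each non-main summand also lies in $\ce_j$. Defining $\te_{0,p,j}$ as the sum of these non-main summands gives $\te_{0,p,j} \in \efe_j \cap \ce_j$. For a white input, Lemma \ref{lema:decomp} makes every $\M_{m_n}$ the null matrix, so all non-main summands vanish and $\te_{0,p,j} = \mathbf{0}$.

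I expect the main obstacle is the combinatorial bookkeeping around the indices: choosing the \emph{leftmost} $\M$-factor (rather than any other) is what exposes a structure that Property \ref{CorrI} can act on, and one must verify that all partial sums of $\{m_n\}$ remain below $K$ throughout, so that the $\Imonio$-composition identity used in the main-term telescoping does not collapse to zero prematurely.
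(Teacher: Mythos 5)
Your proof is correct, and it rests on the same two pillars as the paper's own argument: the decomposition $\Ree_{x,-m}\Ree_x^{-1} = \Imonio_{K,-m} + \M_m\Ree_x^{-1}$ from Lemma \ref{lema:decomp}, and the closure and shift properties (Properties \ref{PropF}, \ref{PropC}, \ref{CorrI}, \ref{Imonio+-}) that confine the remainder terms to $\efe_j\cap\ce_j$. Where you genuinely differ is in the organization. The paper evaluates the product right-to-left by a backward induction, carrying a remainder $\te_n\in\efe_{t_n}\cap\ce_j$ whose row-support parameter $t_n=\max\{\sigma_n-(i-j),\sigma_n-\sigma_{c(p)}\}$ must be recomputed and re-bounded at each step. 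You instead expand all $c(p)+1$ factors at once into $2^{c(p)+1}$ summands and classify each non-main summand by the position $n_0$ of its \emph{leftmost} $\M$-factor: the shift matrices to its left collapse to a single $\Imonio_{K,-a}$, Property \ref{CorrI} then places $\Imonio_{K,-a}\M_{m_{n_0}}$ in $\efe_{a+m_{n_0}}\subseteq\efe_j$, and Properties \ref{PropF} and \ref{PropC} finish the job, with $\ce_j$ membership coming for free from the rightmost $\Imonio_{K,j}$. This buys a shorter argument with no running induction parameter; the price is the (correctly handled) bookkeeping that the gaps $m_0,\dots,m_{c(p)}$ are positive, sum to $j\leq K-1$, and hence have all partial sums below $K$, so the telescoped shift matrices never annihilate and the main term reduces to $\Imonio_{K,-j}\Imonio_{K,j}=\Id_{K,j}$ exactly as in the paper. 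Your treatment of the white-input case, where every $\M_{m_n}$ vanishes and only the main summand survives, also matches the paper's conclusion.
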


\begin{proof}
The proof proceeds by evaluating the product in (\ref{aprox4}) from right to left. To do  this we proceed by  induction, by showing that the following formula is valid:
\begin{equation}
		\left(\prod_{ \sigma_k \in \tildI: k \geq n} \Ree_{x,\sigma_{k+1}-\sigma_k} \Ree_x^{-1}\right) \Ree_{x,i-j-\sigma_{c(p)}}\Ree_x^{-1} \Imonio_{K,j}= \Imonio_{i-j-\sigma_{n}} \Imonio_{K,j} + \te_n, \label{inductiva}
\end{equation}
with $\te_n \in \efe_{t_n}\cap \ce_j $, $t_n = \max\left\{\sigma_n - (i-j),\sigma_n - \sigma_m\right\}$ and $n = 1\ldots (m-1)$.  The induction proceeds backwards, starting from $k = c(p) -1$ to $k = 1$.
With this expression, (\ref{aprox4}) can be simplified to complete the proof.

First we factor the matrix 	$\Ree_{x,i-j-\sigma_{c(p)}}$ using Lemma  \ref{lema:decomp}.
Then we can simplify the left side of (\ref{inductiva}) as:
\begin{equation}
	\Ree_{x,i-j-\sigma_{c(p)}}\Ree_x^{-1} \Imonio_{K,j} = \left(\Imonio_{i-j-\sigma_{c(p)}} \Imonio_{K,j}+  \mathbf{T}_{c(p)}\right), \label{eq:lasterm}
\end{equation}
where $\mathbf{T}_{c(p)}= \M_{c(p)}  \Ree_x^{-1}\Imonio_{K,j}$. Since $\M_{c(p)} \in \efe_{\sigma_{c(p)}-(i-j)}$ and $\Imonio_{K,j} \in \ce_j$  then using Property \ref{PropFC} we have $\mathbf{T}_{c(p)} \in \efe_{\sigma_{c(p)}-(i-j)} \cap \ce_j$.

Now prove that the result is valid for $n = {c(p)}-1$. In this case, the product (\ref{inductiva}) indexed by $k$ only has the term $k=c(p)-1$. Applying the decomposition from Lemma \ref{lema:decomp}  to this term and using (\ref{eq:lasterm}) we have:
\begin{multline}
	\left(\Ree_{x,\sigma_{c(p)}-\sigma_{c(p)-1}} \Ree_x^{-1}\right) \Ree_{x,i-j-\sigma_{c(p)}}\Ree_x^{-1} \Imonio_{K,j}  = \left(\Imonio_{\sigma_{c(p)}-\sigma_{c(p)-1}} + \M_{c(p)-1} \Ree_x^{-1}\right) \\ \times \left(\Imonio_{i-j-\sigma_{c(p)}} \Imonio_{K,j}+ \mathbf{T}_{c(p)}\right). \label{aprox6a}
\end{multline}
We now expand an analyze the four  terms of (\ref{aprox6a}):
\begin{itemize}
	\item Since  $\sigma_{c(p)} - \sigma_{c(p)-1} < 0$ and $i-j-\sigma_{c(p)}< 0$ we apply Property \ref{ProdImonio} to show that $\Imonio_{\sigma_{c(p)}-\sigma_{c(p)-1}} \Imonio_{i-j-\sigma_{c(p)}} \Imonio_{K,j}= \Imonio_{i-j-\sigma_{c(p)-1}} \Imonio_{K,j}.$
	
	\item Since $\Te_{c(p)} \in \efe_{\sigma_{c(p)}-(i-j)} \cap \ce_j$ we apply Property \ref{CorrI} to show that $\Imonio_{\sigma_{c(p)}-\sigma_{{c(p)}-1}}\mathbf{T}_{c(p)} \in \efe_{\sigma_{c(p)-1}-(i-j)} \cap \ce_j$. 

	\item $\M_{{c(p)}-1} \Ree_x^{-1} \Imonio_{i-j-\sigma_{c(p)}} \Imonio_{K,j}  \in \efe_{\sigma_{c(p)-1}-\sigma_c(p)} \cap \ce_j$ since $\M_{c(p)-1} \in \efe_{\sigma_{m}-\sigma_{m-1}}$ and $\Id_{K,j} \in \ce_j$ (Property \ref{PropFC}).
	
	\item $\M_{c(p)-1} \Ree_x^{-1} \mathbf{T}_{c(p)} \in \efe_{\sigma_{c(p)-1}-\sigma_c(p)} \cap \ce_j$ because $\M_{c(p)-1} \in \efe_{\sigma_{c(p)}-\sigma_{c(p)-1}}$ and $\te_{c(p)} \in \ce_j$ (Property \ref{PropFC}).
\end{itemize}
Thus, all the terms, except the first belong either to  $(\efe_{\sigma_{c(p)-1}-\sigma_c(p)} \cap \ce_j)$ or $(\efe_{\sigma_{c(p)-1}-(i-j)} \cap \ce_j)$. We now define $t_{c(p)-1} := \max\left\{\sigma_{c(p)-1}-(i-j),\sigma_{c(p)-1}-\sigma_{c(p)}\right\}$ and:
\begin{equation}
	\te_{c(p)-1} := \Imonio_{\sigma_{c(p)}-\sigma_{c(p)-1}}\mathbf{T}_{c(p)} +\M_{c(p)-1} \Ree_x^{-1} \Imonio_{i-j-\sigma_{c(p)}} \Imonio_{K,j} + \M_{c(p)-1} \Ree_x^{-1} \mathbf{T}_{c(p)},
\end{equation}
so that $\te_{c(p)-1} \in \efe_{t_{c(p)-1}} \cap \ce_j$. Then, (\ref{aprox6a}) is written as:
\begin{equation}
	\Ree_{x,\sigma_{c(p)}-\sigma_{c(p)-1}}  \Ree_x^{-1} \Ree_{x,i-j-\sigma_{c(p)}}\Ree_x^{-1}\Imonio_{K,j} = \Imonio_{i-j-\sigma_{c(p)-1}} \Imonio_{K,j} + \mathbf{T}_{c(p)-1}, \label{aprox7}
\end{equation}
where $\mathbf{T}_{c(p)-1} \in \efe_{t_{c(p)-1}} \cap \ce_j$ and $t_{c(p)-1}= \max\left\{\sigma_{c(p)-1}-(i-j),\sigma_{c(p)-1}-\sigma_m\right\}>0$, which shows that the expression is valid for $n=c(p)-1$.

Now we assume that  (\ref{inductiva}) is valid for $k = n+1$ and show that it is valid for $k=n$.
This means that:
\begin{multline}
	\prod_{ \sigma_k \in \tildI: k \geq n} \left(\Imonio_{\sigma_{k+1}-\sigma_k} + \M_{ k} \Ree_x^{-1} \right) \Ree_{x,i-j-\sigma_{c(p)}}\Ree_x^{-1} \Imonio_{K,j} \\=  \left(\Imonio_{\sigma_{n+1}-\sigma_n} + \M_{n} \Ree_x^{-1} \right) \left(\Imonio_{i-j-\sigma_{n+1}} \Imonio_{K,j} + \te_{n+1}  \right),
\end{multline} 
where we have applied the inductive hypothesis. We have that $\te_{n+1} \in \efe_{t_{n+1}} \cap \ce_j$ with $t_{n+1} =\max\left\{\sigma_{n+1}-(i-j),\sigma_{n+1}-\sigma_{c(p )}\right\}$. Expanding the product and analyzing the terms we have:
\begin{itemize}
	\item $\Imonio_{\sigma_{n+1}-\sigma_n}\Imonio_{i-j-\sigma_{n+1}} \Imonio_{K,j} = \Imonio_{i-j-\sigma_{n}} \Imonio_{K,j}$.
	\item $\Imonio_{\sigma_{n+1}-\sigma_n} \te_{t_{n+1}} \in \efe_{t_{n+1} + (\sigma_n-\sigma_{n+1})} \cap \ce_j$.
	\item $\M_{n} \Ree_x^{-1} \Imonio_{i-j-\sigma_{n+1}} \Imonio_{K,j} \in \efe_{\sigma_{n+1}-\sigma_n} \cap \ce_j$.
	\item $\M_{n} \Ree_x^{-1} \te_{t_{n+1}} \in \efe_{\sigma_{n+1}-\sigma_n} \cap \ce_j$.
\end{itemize}
But since $t_{n+1}>0$ we have $\sigma_n - \sigma_{n+1}+ t_{n+1}> \sigma_n-\sigma_{n+1}.$
In addition:
	\begin{align*}
		t_n &= \max\left\{\sigma_{n+1} - (i-j),\sigma_{n+1} - \sigma_m\right\} + \sigma_n - \sigma_{n+1} \\ 
		&= \max\left\{\sigma_n - (i-j),\sigma_n - \sigma_m\right\}.
	\end{align*}
So we find that:
\begin{equation}
	\te_n = \Imonio_{\sigma_{n+1}-\sigma_n} \te_{t_{n+1}} + \M_{n} \Ree_x^{-1} \Imonio_{i-j-\sigma_{n+1}} \Imonio_{K,j} +
	 \M_{n} \Ree_x^{-1} \te_{t_{n+1}} \in \efe_{t_n}\cap \ce_j,
\end{equation} 
with $t_n = \max\left\{\sigma_n - (i-j),\sigma_n - \sigma_m\right\}$. This proofs that (\ref{inductiva}) is valid.

We can now simplify (\ref{aprox4}) by replacing (\ref{inductiva}) with $k=1$:
\begin{multline}
	\Ree_{x,\sigma_1-i} \Ree_x^{-1} \left(\prod_{ \sigma_n \in \tildI} \Ree_{x,\sigma_{n+1}-\sigma_n} \Ree_x^{-1}\right) \\ \times \Ree_{x,i-j-\sigma_{c(p)}} \Ree_x^{-1} \Imonio_{K,j}
	= \Ree_{x,\sigma_1-i} \Ree_x^{-1} \left(\Imonio_{i-j-\sigma_1} \Imonio_{K,j} + \te_1\right).
\end{multline}
Applying once more the decomposition of Lemma \ref{lema:decomp} to $\Ree_{x,\sigma_1-i}$ we get: 
\begin{equation}
\Ree_{x,\sigma_1-i} \Ree_x^{-1} \left(\Imonio_{i-j-\sigma_1} \Imonio_{K,j} + \te_1\right)  = \Imonio_{\sigma_1-i} \Imonio_{i-j-\sigma_1} \Imonio_{K,j} + \te_0,
\end{equation}
with $\te_0 = \Imonio_{\sigma_1-i} \te_1 + \M_0 \Ree_x^{-1} \Imonio_{i-j-\sigma_1} \Imonio_{K,j} +\M_0 \Ree_x^{-1} \te_1$.

Following a similar reasoning as with the other terms we can prove that $\te_0 \in \efe_{t_0} \cap \ce_j$ with  $t_0 = \max \left\{i-\sigma_1, i-\sigma_1+\max\left\{\sigma_1 - (i-j),\sigma_1 - \sigma_m\right\}\right\} = i-\sigma_1+\max\left\{\sigma_1 - (i-j),\sigma_1 - \sigma_m\right\} =\max\left\{j,i - \sigma_m\right\}$.
But since $\sigma_m > i-j$ we have that $t_0 = j$, which shows that $\te_0 \in \efe_j \cap \ce_j$. Finally, we conclude the proof by noting that: 
$	\Imonio_{\sigma_1-i} \Imonio_{i-j-\sigma_1} \Imonio_{K,j} =  \Imonio_{K,-j}\Imonio_{K,j} = \Id_{K,j},$
which is obtained using Properties \ref{Imonio+-} and \ref{ProdImonio}. The result for a white input follows by noting that all the $\M$ matrices from the decomposition (\ref{descompR}) are zero for a white input.
\end{proof}
To conclude the proof of  Theorem \ref{teo:main}, we use the results of Lemmas \ref{lema:exp1} and \ref{lema:exp2}.
We replace (\ref{aprox4}) in (\ref{aprox2}) and this result together with (\ref{eq:term1}) in (\ref{eavwhite5}) to obtain:
\begin{multline}
	\Ex\left[\e_{a,i}\ve_i^H \right] \approx - \sigma_v^2 \sum_{j=1}^{K-1}\mu_{i-j} \left(\Id	_{K,j} + \M_j \Ree_x^{-1} \Imonio_{K,j}\right) \\
	-\sigma_v^2\left[\sum_{j=2}^{K-1} \sum_p(-1)^{f(p)} \mu_{i-j} \left(\prod_{\sigma_n \in \mathcal{I}_p}\mu_{\sigma_n}\right) \left(\Id_{K,j} + \te_{0,p,j}\right) \right]. 
\end{multline}
Using Property \ref{diag0} we have that $\te_{0,p,j}$ and $\M_j \Ree_x^{-1} \Imonio_{K,j}$ are  upper triangular matrices, so $\Ex\left[\e_{a,i}\ve_i^H \right]$ also is.
In particular for a white input, $\te_{0,p,j}$ and $\M_j$ are the zero matrix, so $\Ex\left[\e_{a,i}\ve_i^H \right]$ is a diagonal matrix. In both cases, from Property \ref{diag0} we have that $\diag(\te_{0,p,j}) = \diag(\M_j \Ree_x^{-1} \Imonio_{K,j}) = \mathbf{0}$ for all $j, p$. So we can simplify this expression to obtain:
\begin{multline}
	\diag \Ex\left[\e_{a,i}\ve_i^H \right] \approx   - \sigma_v^2 \mu_{i-1} \diag\left(\Id_{K,1}\right) \\
	-\sigma_v^2\sum_{j=2}^{K-1} \mu_{i-j}\left[1+ \sum_p(-1)^{f(p)}  \left(\prod_{\sigma_n \in \mathcal{I}_p}\mu_{\sigma_n}\right)\right] \diag\left(\Id_{K,j} \right)
	.\label{aprox11}
\end{multline}
Now we conclude the proof by observing that 
\begin{equation}
	1+\sum_p (-1)^{f(p)} \prod_{\sigma_n \in \mathcal{I}_p}\mu_{\sigma_n} = \prod_{k=1}^{j-1}\ (1-\mu_{i-k}). \label{aprox12}
\end{equation}
that is, we revert the matrix factorization from Appendix \ref{Sec.descomp}
 which is also valid for scalars.  $ \Ex\left[\e_{a,i}\ve_i^H \right]_{q,q}$ can be found by considering the individual elements of the diagonal.


\ifCLASSOPTIONcaptionsoff
  \newpage
\fi



%

%
\bibliographystyle{IEEEtran}
\bibliography{IEEEabrv,biblio}








\end{document}